\newtheorem{lemma}{Lemma}
\begin{document}
\title{H-Probe: Estimating Traffic Correlations from\\ Sampling and Active Network Probing}
\author{\authorblockN{Amr Rizk, Zdravko Bozakov, and Markus Fidler}
\IEEEauthorblockA{Institute of Communications Technology, Leibniz Universit\"{a}t Hannover\\ \{amr.rizk, zdravko.bozakov, markus.fidler\}@ikt.uni-hannover.de}}
\maketitle
\begin{abstract}
An extensive body of research deals with estimating the correlation and the Hurst parameter of Internet traffic traces. The significance of these statistics is due to their fundamental impact on network performance. The coverage of Internet traffic traces is, however, limited since acquiring such traces is challenging with respect to, e.g., confidentiality, logging speed, and storage capacity. In this work, we investigate how the correlation of Internet traffic can be reliably estimated from random traffic samples. These samples are observed either by passive monitoring within the network, or otherwise by active packet probes at end systems. We analyze random sampling processes with different inter-sample distributions and show how to obtain asymptotically unbiased estimates from these samples. We quantify the inherent limitations that are due to limited observations and explore the influence of various parameters, such as sampling intensity, network utilization, or Hurst parameter on the estimation accuracy. We design an active probing method which enables simple and lightweight traffic sampling without support from the network. We verify our approach in a controlled network environment and present comprehensive Internet measurements. We find that the correlation exhibits properties such as long range dependence as well as periodicities and that it differs significantly across Internet paths and observation times.
\end{abstract}
\section{Introduction}
\label{sec:introduction}
Traffic characteristics play a key role in planning and operation of packet data networks. As a consequence, in recent years network measurements have attracted considerable attention as a practical method for inferring traffic properties. The scope of such measurements varies from access networks to backbone networks or even across the Internet.

Numerous comprehensive measurement studies, based on recorded network traces, have revealed that aggregate Internet traffic possesses long memory correlations, so-called long range dependence (LRD)\cite{leland94,paxon95,crovella97,feldmann:iptraffic}.
The impact of LRD on network performance was investigated in several works, e.g., \cite{norros95,duffield94,erramilli96,massoulie99,ribeiro:multiscalequeueing,rizk12:e2efbm,liebeherr12:heavytail,mandjes07}.
Networks fed with LRD traffic exhibit a fundamentally different behavior compared to systems fed with memoryless or Markovian traffic.

In practice continuous logging and evaluation of all relevant network events in large networks is typically not feasible due to efficiency, confidentiality, and cost factors. For example, with link speeds of $10$~Gbps and more capturing traffic traces becomes increasingly difficult, as suitably large and fast storage systems are expensive. One main challenge is therefore, to extract the desired information from a subset of events, e.g., using a sampling procedure that yields consistent estimates of the target metric. In addition, ISPs rarely disclose traffic traces because of confidentiality issues such that traffic characteristics can only be inferred from external observations. Further, a fundamental limitation of traffic traces is that these reflect traffic characteristics at only a \emph{single} observation point.

In this work, we investigate the problem of estimating the correlation of Internet traffic given a limited set of random samples. First, we consider passive sampling, i.e., capturing traffic samples at some directly accessible node, e.g., a router. Here, the main focus is on the choice of the sampling process and it's properties. Further, for any practical realization passive sampling yields a finite sample size, which directly influences the accuracy of the results. Secondly, we consider active probing that is a technique, where external measurements of specific probe packets are used. The aim is to avoid any particular network support by exploiting, e.g., timing information that is imprinted on the probes by interaction with network traffic. The additional challenge of active compared to passive methods is to design probes that actually permit inferring the desired traffic characteristics, which in certain cases may even be impossible~\cite{machiraju07}.

The ultimate result of this work is to enable the online estimation of traffic correlations along network paths without network support. To this end, we present methods for extracting LRD characteristics from sampled traffic. We derive the impact of sampling on the observed traffic correlations for different sampling strategies and show that sampling may distort observations. We develop methods that reverse these effects for a set of sampling processes. We quantify the accuracy of the observations under finite sampling durations, showing that the estimation error increases as $\tau^{2-2H}$ with the autocovariance lag $\tau$ and the LRD Hurst parameter $H \in (0.5,1)$. We derive the impact of different sampling parameters on estimation accuracy and show a non-linear trade-off between sampling intensity and sampling duration. Finally, we design and evaluate a practical active probing method to estimate traffic correlations from external observations. We present practical testbed and Internet measurement results showing a complex covariance structure of Internet traffic that exhibits LRD as well as periodic behavior.

The paper is structured as follows: In the next section we present the state-of-the-art on LRD network traffic characteristics, sampling and active network probing. In Sect.~\ref{sec:sampling} we derive our main results concerning traffic sampling and the accuracy of the estimated traffic parameters. In Sect.~\ref{sec:network_probing} we present and deploy an active probing method that uses packet probes to infer traffic correlations. Sect.~\ref{sec:conclusions} concludes the paper.
\section{Related Work}
\label{sec:relatedwork}
In the following, we discuss related work on LRD traffic characteristics, sampling and network probing.
\subsection{LRD traffic characteristics}
Comprehensive measurements in the 90s, e.g.,
\cite{leland94,paxon95,crovella97,feldmann:iptraffic} revealed that aggregate Internet traffic exhibits LRD and self-similarity phenomena, that can be described by the so-called Hurst parameter $H$. A self-similar stochastic process possesses the same finite dimensional distributions on different time scales except for a rescaling factor which depends on $H$. The aggregation of multiple traffic sources offers a possible explanation of these characteristics. It was shown in \cite{taqqu97} that aggregating many on-off sources with heavy tailed on and off periods yields self-similar LRD traffic. This notion corresponds to file transfers from heavy tailed file size distributions as observed on storage systems \cite{crovella97,willinger97}. An experimental validation of the relation between self-similarity and heavy-tailed distributions is carried out in \cite{loiseau:selfsimilarity10} on a large-scale experimental facility.

Given a stationary process $Y(t)$, LRD manifests itself in the slow decay of the autocovariance\footnote{Throughout this work, we use the definition of autocovariance in the signal processing sense, i.e., for a stationary process $Y(t)$ the autocovariance is defined as $c_Y(\tau):= \mathsf{E}[Y(t)Y(t+\tau)]-\mathsf{E}[Y(t)] \mathsf{E}[Y(t+\tau)]$. For brevity, we frequently use the term covariance to mean autocovariance.} $c_Y(\tau)$ such that
\begin{eqnarray}
\label{eq:autocovariance_fgn}
c_Y(\tau) \sim \sigma_Y^2 \tau^{2H-2} \quad \text{for } \tau \rightarrow \infty,
\end{eqnarray}
where $\sigma_Y^2$ is the variance of $Y(0)$ and the Hurst parameter $H \in (0.5,1)$. The sum of the autocovariance over all lags $\tau$ diverges, i.e., $\sum_{\tau} c_Y(\tau) \rightarrow \infty$.

In this work, we focus on the autocovariance structure of~\eqref{eq:autocovariance_fgn}. Our goal is to infer~\eqref{eq:autocovariance_fgn} from traffic observations, respectively, to estimate the Hurst parameter $H$ from from the slope of $c_Y(\tau)$ on a log-log scale. Numerous other methods exist for estimating the Hurst parameter from LRD and self-similar time series~\cite{beran94,taqqu:estimators,veitch99}.

In addition to \eqref{eq:autocovariance_fgn}, we consider two established methods for estimating the Hurst parameter $H$ \cite{beran94,taqqu:estimators}. First, we consider the aggregate variance method, that relies on the convergence rate of the sample mean of an LRD process to the true mean. Given samples of $Y(t)$ of size $M$, the variance of the sample mean decays as $\sim M^{2H-2}$ with growing $M$.

The second method denoted power spectral density method relies on the behavior of the spectral density of the LRD process $Y(t)$. The spectral density $\Psi_Y(f)$ of $Y(t)$ is well known \cite{beran94}. It can be approximated as
\begin{eqnarray}
\label{eq:psd_fgn}
\Psi_Y(f) \sim \left|f\right|^{1-2H} \quad \text{for } f \rightarrow 0.
\end{eqnarray}
The Hurst parameter $H$ can be estimated from the slope of $\Psi_Y(f)$ plotted against the frequency $f$ on a log-log scale.
\subsection{Sampling}
Sampling is widely used to reduce the data processing and storage requirements as well as to circumvent problems, such as system inaccessibility and hardware access latency. A fundamental result often employed in the sampling context is known as PASTA, Poisson Arrivals see Time Averages~\cite{wolff:pasta}. PASTA states that the portion of Poisson arrivals that see a system in a certain state corresponds, in average, to the portion of time the system spends in that state.

Further, the authors of \cite{Melamed:asta90} establish general conditions, such that Arrivals See Time Averages (ASTA) holds, i.e., bias free estimates are not limited to Poisson sampling. In a recent work the authors of \cite{baccelli:pasta} coined the term NIMASTA, i.e. Non-intrusive Mixing Arrivals See Time Averages, in the context of network measurements. Using an argument on joint ergodicity, the authors prove an almost sure convergence of
\begin{eqnarray}
\label{eq:nimasta}
\lim_{N\rightarrow \infty} \frac{1}{N} \sum_{i=1}^{N} g(Y(\theta_i)) = \mathsf{E}\left[g(Y(0))\right]
\end{eqnarray}
where $Y(\theta_i)$ is a sample of the process $Y(t)$ at time $\theta_i$ and $g$ is a general positive function of $Y$. The sampling times $\theta_i$ for $i \in \mathbb{N}$ are chosen according to a sampling process. The target metric is specified depending on the chosen function~$g$. Eq. \eqref{eq:nimasta} is satisfied when the process $Y(t)$ is ergodic and the sampling process is mixing~\cite{baccelli:pasta}. The authors in~\cite{baccelli:probing07} show that Poisson sampling, though bias free, does not guarantee minimum variance estimates.

A comparison of Poisson and periodic sampling was carried out in \cite{bintariq05,Roughan:comparison06}. In \cite{bintariq05} the authors show experimentally, that the differences between round trip times (RTT), loss rate and packet pair dispersion estimates, obtained by either Poisson or periodic probing, are in some cases not significant. Depending on the autocovariance of the sampled process, Poisson or periodic sampling can be superior. This is shown in \cite{Roughan:comparison06} using the metric asymptotic variance.

In \cite{viano:random_sampling} it is shown that for correlation lags tending to infinity, random sampling captures the long memory of the original processes, as long as the sampling distribution has a finite mean.
\subsection{Active network probing}
The injection of test packets into a network for inferring network performance, i.e., active probing, has attracted considerable attention in recent years. End-to-end packet delays or inter packet times are metrics commonly used to estimate network characteristics such as the average available bandwidth or even to reconstruct statistics of the cross-traffic \cite{jacobson:97,spruce,Jain:avbw:02,ribeiro:sampling06}. Under the assumption of FIFO scheduling, cross traffic intensity can be estimated from the dispersion of back-to-back probing packets \cite{dovrolis:dispersion01,spruce,liu05,liu07}.

Cross traffic estimation of LRD traffic using active measurements was discussed, in \cite{ribeiro:cross_traffic_est00,he03}. The authors of \cite{he03} carry out a numerical simulation to interpolate cross traffic from probes and predict future traffic from the LRD property. In \cite{ribeiro:cross_traffic_est00} the authors derive and show simulation results for a deterministic probing scheme based on a multi-fractal wavelet traffic model. Essential to their estimation is the assumption that the queue does not empty between the the individual packets of a packet probe. Our work differs from \cite{ribeiro:cross_traffic_est00,he03} as we examine different \emph{random} sampling distributions and show how to extract traffic correlations from distorted observations.

Two important aspects concerning network probing are the measurement intrusiveness and the interaction of probes with the measured system. The first aspect is usually addressed by minimizing the probing rate while controlling the quality of the results. The second aspect is more involved, since the probes perturb the system leading to distorted observations. For example, measuring queueing delays of probes to determine the true queue length distribution is governed by a type of Heisenberg uncertainty \cite{roughan05}, since the probes alter the queue length. The authors describe the impact of the probing intensity on the accuracy of the result using the notion of asymptotic variance. The effect is increased in case of LRD traffic, although not given in closed form, leading to higher uncertainty in the estimated waiting time \cite{roughan05}.
\section{Traffic sampling and parameter estimation}
\label{sec:sampling}
In this section we derive our main results on traffic covariance estimation from sampled observations. Based on sampling properties we present rigorous traffic parameter estimation. Subsequently, we investigate the accuracy of the estimates under the practical constraint of finite sample sizes.
\subsection{Covariance of sampled processes}
\label{subsec:sampling}
We define a sampling model comprising of three stationary discrete time processes: a traffic increment process $Y(t)$, a sampling process $A(t)$, and an observed process $W(t)$ for $t\!~\in~\!\mathbb{N}_0$. We assume statistical independence of $A(t)$ and $Y(t)$. Our focus lies on the estimation of the covariance of $Y(t)$ that is characterized by LRD. While the LRD process may be in continuous time, we regard its increments on a fixed time slot basis, and hence the discretization of $Y(t)$.
\begin{table*}
\begin{center}
\caption{Parametrization of sampling distributions and traffic parameter estimation}
\label{tab:sampling_distributions}
{\small
\begin{tabular}{|c|c c c c c|} \hline
\multirow{2}{*}{} & inter-sample distribution  & autocovariance $c_A(\tau)=$  & reconstructed traffic  & remarks & \\
 & $f(\tau)$ & $\mathsf{E}\left[A(t) A(t+\tau)\right]-\mu_A^2$ for $\tau>0$ & autocovariance $c_Y(\tau)$ & & \\ \hline \hline
Geometric & $p(1-p)^{\tau-1}$ & $0$  & $\frac{c_W(\tau)}{\mu_A^2}$ & $\mu_A = p$ & \\
Periodic & $\delta(\tau - \Delta)$ & $
\begin{array}{l l}
  1/\Delta - 1/\Delta^2 &  \mbox{for $\tau = k\Delta$, $ k \in \mathbb{N}$}\\
  - 1/\Delta^2 &  \mbox{otherwise}\\ \end{array} $ & $\frac{c_W(\tau)-\mu_A\mu_Y^2\left(1-\mu_A\right)}{\mu_A}$ & $\begin{array}{c c}
  c_Y(\tau)   \mbox{ at $\tau = k\Delta$,}\\
  \mu_A = 1/\Delta  \\ \end{array}$ & \\[3pt]
Gamma & $\frac{\beta^{\alpha}}{\Gamma(\alpha)} \tau^{\alpha-1} e^{-\beta \tau}$ & $-\mu_A^2e^{-4\mu_{\!A}\tau}$ & $\frac{c_W(\tau)+\mu_A^2 \mu_Y^2 e^{-4\mu_{\!A}\tau}}{\mu_A^2 \left(1-e^{-4\mu_{\!A} \tau}\right)}$ & for $\alpha=2, \mu_A = \frac{\beta}{\alpha}$ & \\[3pt]
Uniform & $1/b$ for $0 \le \tau \le b$ & $\mu_A^2\left(\frac{1}{2} e^{\frac{1}{2}\mu_{\!A}\tau}-1\right)$  & $\frac{c_W(\tau) - \mu^2_A \mu^2_Y \left(\frac{1}{2}e^{\frac{1}{2}\mu_{\!A} \tau} - 1\right)}{\mu^2_A \frac{1}{2} e^{\frac{1}{2}\mu_{\!A} \tau}}$ & $\begin{array}{c c}  c_Y(\tau)   \mbox{ for $\tau \le b$,}\\
  \mu_A = 2/b  \\ \end{array}$ & \\ \hline
\end{tabular}
}
\end{center}
\end{table*}

The sampling process $A(t)$ is a point process taking the value of one whenever a sample is taken, and zero otherwise, i.e., $A(t)$ is a Kronecker delta train, where a Kronecker delta is defined as $\delta(n)=1$ for $n=0$ and zero otherwise. The process has independent and identically distributed (iid) inter-sample times drawn from a given probability distribution~$F$. The inter-sample time is the time between two consecutive Kronecker deltas. The sampling intensity, i.e., the mean rate of the sampling process of $A(t)$, is $\mathsf{E}\left[A(t)\right]=\mu_{A}$ for all $t$, with $0 \le \mu_A \le 1$. Throughout this work we use $\mu_{(\cdot)}$ to denote the expected value $\mathsf{E}\left[(\cdot)\right]$.

We base our analysis on the observed stochastic process $W(t)$, generated by random samples $A(t)$ of the increment process $Y(t)$, with
\begin{eqnarray}
\label{eq:process_sampling}
W(t) = A(t) Y(t).
\end{eqnarray}
\begin{figure}[t]
		\includegraphics[width=1.00\columnwidth]{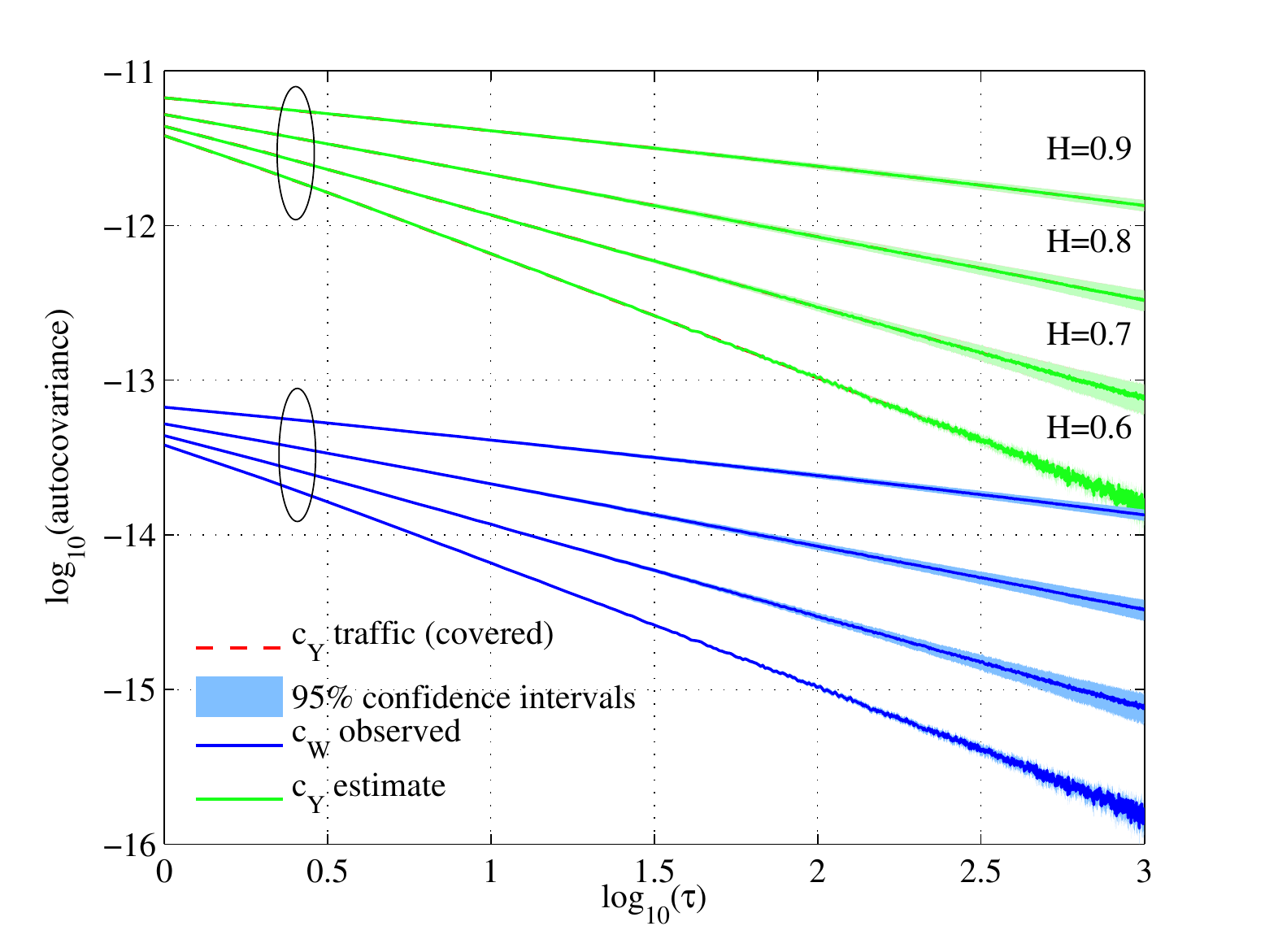}
		\caption{Autocovariance of LRD traffic processes under geometric sampling. The observed ``$c_W(\tau)$" maintains the autocovariance structure of the traffic process. The covariance of the  original process ``$c_Y(\tau)$ (traffic)" is exactly covered by the reconstructed ``$c_Y(\tau)$ (estimate)". }
	\label{fig:xcov_poisson_sim}
\end{figure}
\begin{figure*}[b]
        \centering
            \subfigure[Periodic sampling]{
            \label{fig:cyhat_periodic}
            \includegraphics[scale=0.373]{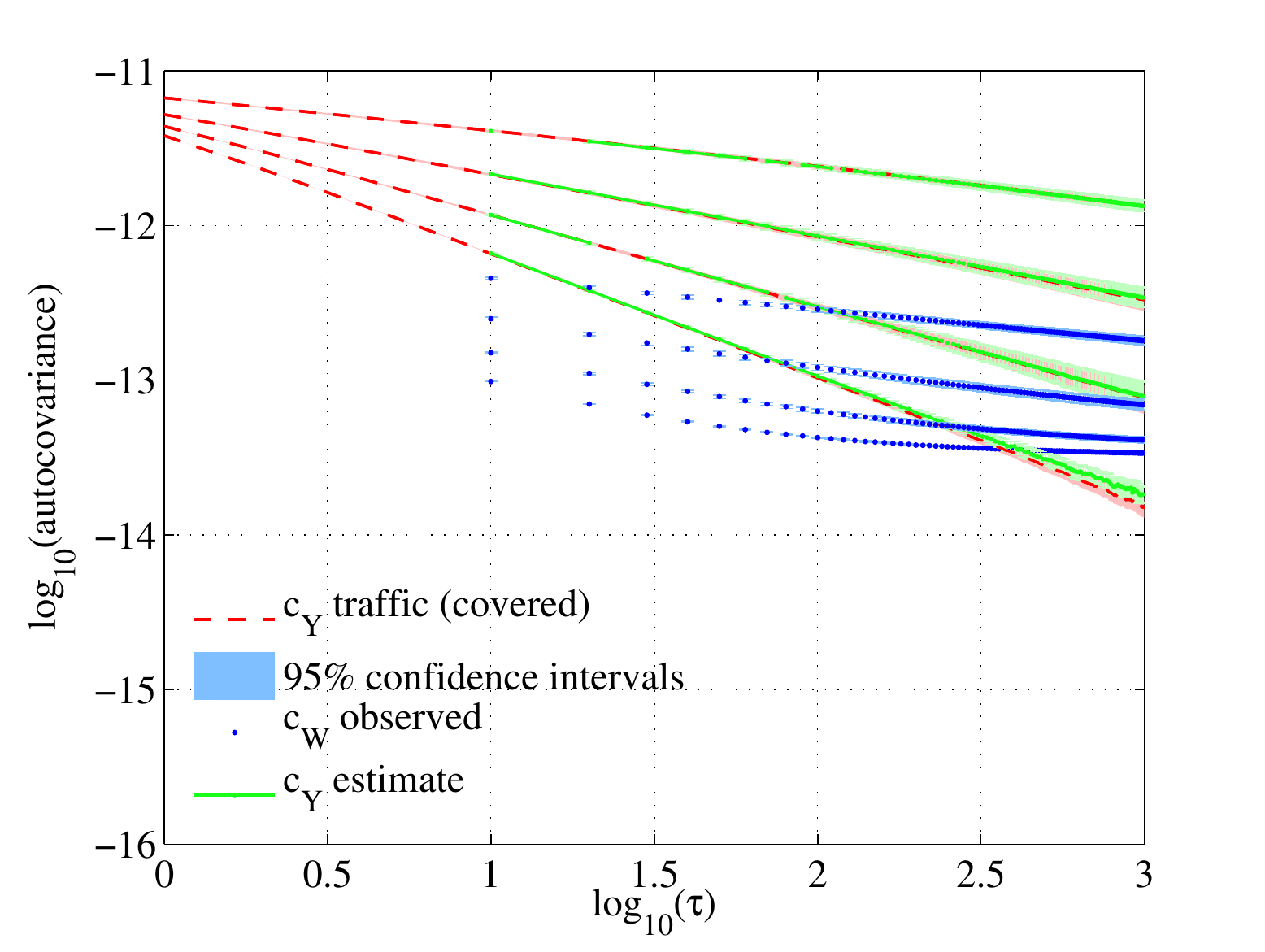}
            }
            \hspace{-20pt}
            \subfigure[Gamma sampling]{
            \label{fig:cyhat_gamma}
            \includegraphics[type=pdf,ext=.pdf,read=.pdf,scale=0.373]{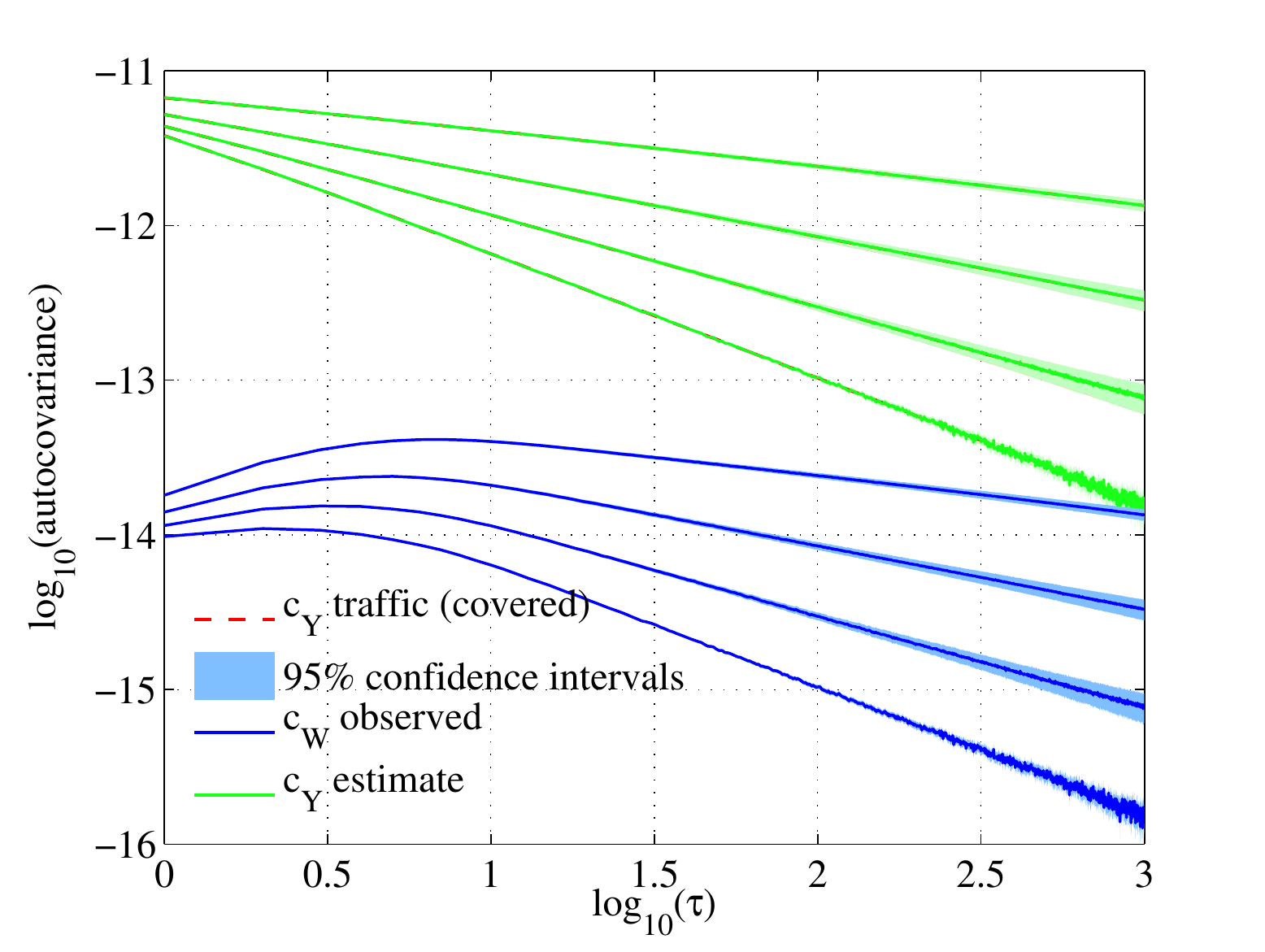}
            }
            \hspace{-20pt}
            \subfigure[Uniform sampling]{
            \label{fig:cyhat_uniform}
            \includegraphics[scale=0.373]{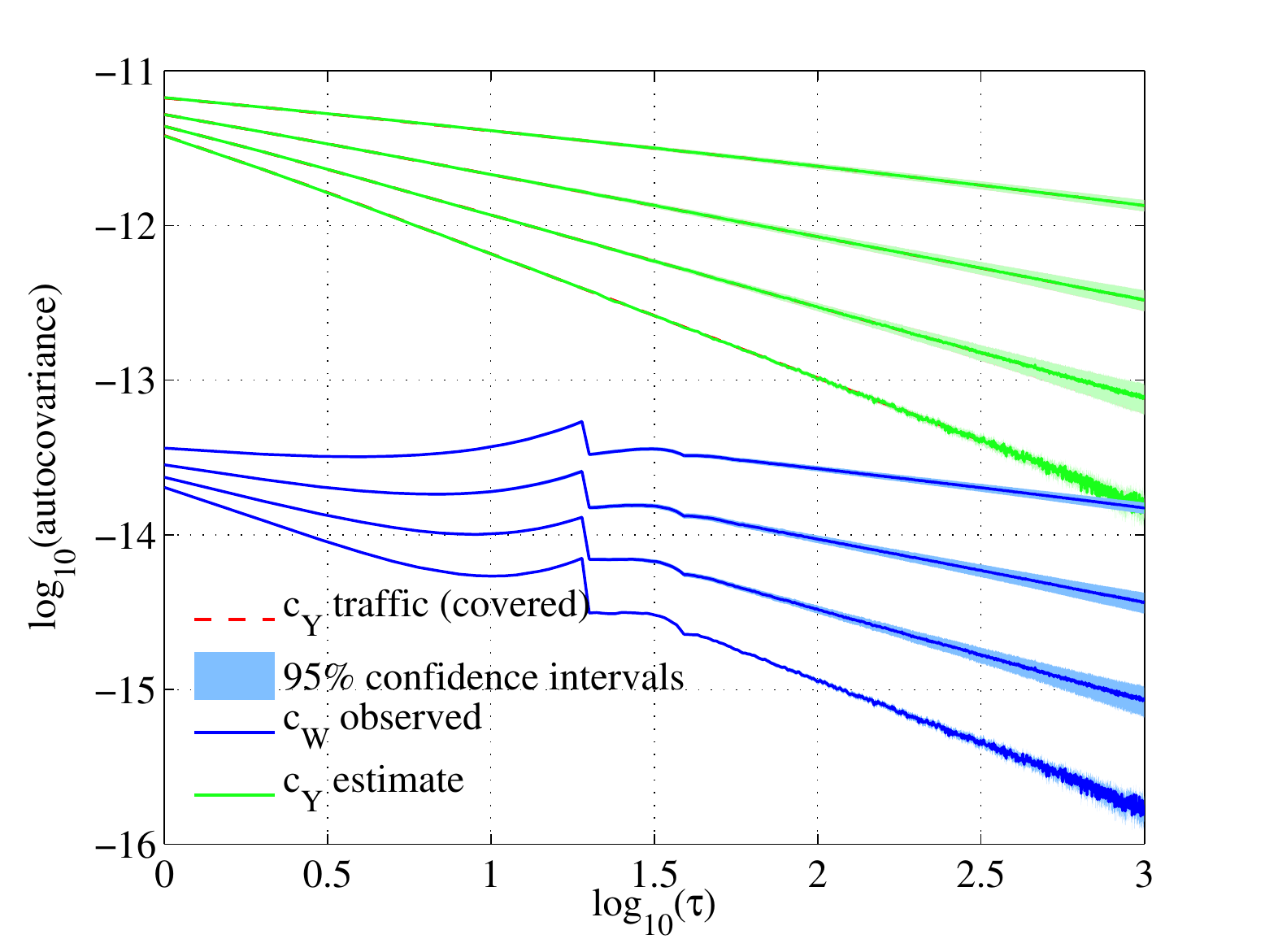}
            }
            \caption{Autocovariance of the LRD process under different sampling strategies. Note that ``$c_Y(\tau)$ (traffic)" is covered by the ``$c_Y(\tau)$ (estimate)".}
            \label{fig:cyhat_diff_probing}
\end{figure*}

We aim to infer properties of the traffic process $Y(t)$ from the observation process $W(t)$. In particular, we are interested in sampling distributions $F$ that deliver accurate estimates of the correlations of the LRD traffic process $Y(t)$ and the associated Hurst parameter $H$. Extracting the autocovariance of the process $Y(t)$, i.e., $c_Y(\tau)$ from the observed $c_W(\tau)$ is generally not a straightforward task. The following lemma reveals the impact of sampling on the autocovariance of the observed process. The proof of Lem.~\ref{lem:sample_covariance} is a variation of standard technique in stochastics.
\begin{lemma}
\label{lem:sample_covariance}
Given the stationary and independent stochastic processes $A(t)$ and $Y(t)$ and let $W(t)=A(t)Y(t)$. The covariance of $W(t)$ can be decomposed into
\begin{eqnarray*}
c_W(\tau)=  \left(c_A(\tau)+\mu_{A}^2\right)c_Y(\tau) + c_A(\tau) \mu_{Y}^2.
\end{eqnarray*}
\end{lemma}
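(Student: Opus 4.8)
The plan is to work directly from the signal-processing definition of autocovariance recalled in the footnote, namely $c_W(\tau) = \mathsf{E}[W(t)W(t+\tau)] - \mathsf{E}[W(t)]\mathsf{E}[W(t+\tau)]$, and to reduce everything to the first and second moments of $A$ and $Y$ by exploiting their mutual independence together with stationarity.

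First I would compute the mean of the observed process. Since $W(t)=A(t)Y(t)$ and the processes $A$ and $Y$ are independent, $\mathsf{E}[W(t)] = \mathsf{E}[A(t)]\,\mathsf{E}[Y(t)] = \mu_A\mu_Y$, which by stationarity is independent of $t$; hence $\mathsf{E}[W(t)]\,\mathsf{E}[W(t+\tau)] = \mu_A^2\mu_Y^2$.

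Next I would handle the lag-$\tau$ product term. Writing $W(t)W(t+\tau) = \big(A(t)A(t+\tau)\big)\big(Y(t)Y(t+\tau)\big)$, independence of the two processes lets me factor the expectation as $\mathsf{E}[W(t)W(t+\tau)] = \mathsf{E}[A(t)A(t+\tau)]\,\mathsf{E}[Y(t)Y(t+\tau)]$. I would then substitute the elementary identities $\mathsf{E}[A(t)A(t+\tau)] = c_A(\tau)+\mu_A^2$ and $\mathsf{E}[Y(t)Y(t+\tau)] = c_Y(\tau)+\mu_Y^2$, which again follow from the definition of autocovariance and stationarity. This gives $c_W(\tau) = (c_A(\tau)+\mu_A^2)(c_Y(\tau)+\mu_Y^2) - \mu_A^2\mu_Y^2$; expanding the product and cancelling the $\mu_A^2\mu_Y^2$ terms yields exactly the claimed decomposition $c_W(\tau) = (c_A(\tau)+\mu_A^2)c_Y(\tau) + c_A(\tau)\mu_Y^2$.

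There is no genuine analytic obstacle here — this is the ``variation of a standard technique in stochastics'' the authors allude to. The only step that deserves care is the factorization of $\mathsf{E}[A(t)A(t+\tau)Y(t)Y(t+\tau)]$: it requires the \emph{full} independence of the processes $A(\cdot)$ and $Y(\cdot)$, so that the pair $(A(t),A(t+\tau))$ is independent of $(Y(t),Y(t+\tau))$, rather than mere independence at a single time instant, and it implicitly relies on stationarity so that every moment involved depends on the lag $\tau$ alone. Once that is granted, the remainder is routine bookkeeping.
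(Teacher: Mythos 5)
Your proposal is correct and follows exactly the paper's own proof: factor $\mathsf{E}[W(t)W(t+\tau)]$ using independence of the processes, substitute $\mathsf{E}[A(t)A(t+\tau)]=c_A(\tau)+\mu_A^2$ and its analogue for $Y$, then expand and cancel $\mu_A^2\mu_Y^2$. Your remark that the factorization needs independence of the pair $(A(t),A(t+\tau))$ from $(Y(t),Y(t+\tau))$ --- i.e., independence of the processes, not merely pointwise independence --- is a correct and slightly more careful reading of the hypothesis than the paper makes explicit.
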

\begin{proof}
Given independent and stationary processes $A(t)$ and $Y(t)$. Let $W(t) = A(t) Y(t)$. It follows that
\begin{eqnarray*}
c_W(\tau) & = & \mathsf{E}\left[A(t) A(t+\tau)\right] \mathsf{E}\left[Y(t) Y(t+\tau)\right] - \mu_{A}^2 \mu^2_{Y}\nonumber\\
& = & \left(c_A(\tau)+\mu_{A}^2\right) \left(c_Y(\tau)+\mu_{Y}^2\right) - \mu_{A}^2 \mu^2_{Y}\nonumber\\
& = & \left(c_A(\tau)+\mu_{A}^2\right)c_Y(\tau) + c_A(\tau) \mu_{Y}^2\nonumber
\end{eqnarray*}
where $c_{(\cdot)}(\tau)$ denotes the covariance of process $(\cdot)$ at lag $\tau$.
\end{proof}

Lem. \ref{lem:sample_covariance} clearly shows the impact of the sampling process on the observed covariance. In particular, the choice of the inter-sample distribution influences $c_W(\tau)$ through $\mu_A$ and $c_A(\tau)$, i.e., both the sampling intensity and the sampling covariance influence the observation.

In this work, we investigate four inter-sample distributions: geometric (memoryless), periodic, Gamma, and uniform. For each distribution we show how to recover the covariance of the LRD process $c_Y(\tau)$ from the observed $c_W(\tau)$ using the covariance $c_A(\tau)$. To this end, we derive the covariance of the sampling process $c_A(\tau) = \mathsf{E}\left[A(t)A(t+\tau)\right] - \mu_A^2$. We use the probability mass function $f(\tau)$ of the inter sample times to calculate the $n$-fold self-convolution $f^{(*n)}(\tau)$. We then calculate the autocorrelation $\mathsf{E}\left[A(t)A(t+\tau)\right] = \mu_A \sum_{n=1}^{\infty}f^{(*n)}(\tau)$ as given in \cite{cox66}, Eq.~(4.6.1). We exploit the property that $f^{(*n)}$ is a power series for the considered distributions and that its sum converges. The derivation of the autocorrelations used in Tab.~\ref{tab:sampling_distributions} is given in appendix~\ref{subsec:Autocorrelation_sampling_processes} to \ref{subsec:Uniform_sampling}. In the last step we insert $c_A(\tau)$ into Lem.~\ref{lem:sample_covariance} and solve for $c_Y(\tau)$.

Tab.~\ref{tab:sampling_distributions} summarizes the expressions used to reconstruct $c_Y(\tau)$ given specific inter-sample distribution parameters and corresponding $c_A(\tau)$. First, we consider the geometric inter-sample distribution, i.e., a Bernoulli sampling process. The independence of the increments implies that $c_A(\tau) = 0$ for $\tau > 0$. From Lem.~\ref{lem:sample_covariance}, the observations $W(t)$ have autocovariance
\begin{eqnarray}
\label{eq:geometric_reconst}
c_W(\tau) = c_Y(\tau) \mu_A^2.
\end{eqnarray}
This shows that sampling processes with uncorrelated increments only shift the autocovariance structure of the sampled process $Y(t)$ by $\mu_A^2$.

Next, we consider periodic sampling, where $A(t)$ is modeled as a comb of Kronecker deltas with sampling period $\Delta$. The mean intensity of the sampling process is $\mu_A = 1/\Delta$. We can recover $c_Y(\tau)$ using Lem.~\ref{lem:sample_covariance}, however, only at $\tau=k\Delta$ where $k \in \mathbb{N}$. To perform this inference, the mean rate $\mu_Y$ of the traffic process $Y(t)$ must be known. Due to the rigid structure of periodic sampling it is, however, shown that the mean rate estimator $\mu_W / \mu_A$ is not unbiased \cite{baccelli:pasta}, e.g., the sampling period may coincide with periodicities in the original process.

Finally, Tab.~\ref{tab:sampling_distributions} provides expressions for reconstructing $c_Y(\tau)$ after Gamma and uniform sampling. For mathematical tractability, here we use continuous time for the derivation of the autocorrelation of $A(t)$. For discretization we use a time slot of unit size. Note that the discretization error diminishes for autocorrelation lags much larger than the discretization time slot. In case of Gamma sampling, the ability to estimate $c_Y(\tau)$ is not limited to the exemplary $\alpha=2$ given in Tab.~\ref{tab:sampling_distributions}. Lem. \ref{lem:sample_covariance} can be used to estimate $c_Y(\tau)$ for Gamma sampling processes with arbitrary parameters as long as the autocovariance $c_A(\tau)$ is computable. We provide results for Gamma sampling with $\alpha=4$ in
appendix~\ref{subsec:Gamma_sampling}. For uniform sampling with support $b$ Tab.~\ref{tab:sampling_distributions} gives a result for lags $\tau \le b$. Due to the finite support, $c_A(\tau)$ quickly approaches zero for $\tau > b$. Like in case of periodic sampling, the reconstruction of $c_Y(\tau)$ from Gamma and uniform sampling, respectively, requires knowledge of $\mu_Y$.

Figures~\ref{fig:xcov_poisson_sim} and \ref{fig:cyhat_diff_probing} illustrate autocovariance estimates derived from observations $W(t)$, that are obtained by sampling LRD traffic with autocovariance $c_Y(\tau) \sim \sigma_Y^2 \tau^{2H-2}$ and $H \in [0.6,0.9]$.\footnote{Synthetic traces of length $2.5\times10^8$ time slots were used for the   simulation which was repeated $25$ times for each considered $H$.}  We use geometric, periodic, Gamma, and uniform inter-sample time distributions and set $\mu_A\!=\!0.1$. In all cases the reconstructed autocovariance denoted ``$c_Y$(estimate)" exactly covers the original traffic autocovariance ``$c_Y$(traffic)".

Geometric sampling in Fig.~\ref{fig:xcov_poisson_sim} preserves the linear covariance structure of $c_Y(\tau)$. The observed $c_W(\tau)$ is vertically shifted by $\log(\mu_A^2)$ w.r.t. the original $c_Y(\tau)$. The Hurst parameter $H$ can be inferred directly from the slope of $c_W(\tau)$.

For the remaining distributions shown in Fig.~\ref{fig:cyhat_diff_probing}, the observations $c_W(\tau)$ are distorted. However, using Lem.~\ref{lem:sample_covariance} we recover the original covariance $c_Y(\tau)$. Using the expressions from Tab.~\ref{tab:sampling_distributions} we reconstruct ``$c_Y$(estimate)" which lies on top of the original autocovariance ``$c_Y$(traffic)".

In the following we discuss advantages and disadvantages of the presented sampling distributions. Periodic and uniform sampling are practically convenient as the inter-sample times cannot become arbitrarily large due to the finite support of the inter-sample distribution. Moreover, periodic sampling is easy to implement.

However, it is important to point out that periodic sampling yields misleading results if the sampling period coincides with periodicities in the target process. In addition, periodic, Gamma as well as uniform sampling require a reconstruction step to estimate the covariance $c_Y(\tau)$ from observations as shown above. To this end, an estimate of $\mu_Y$ is required.

Memoryless sampling is proposed by the IETF as a network probing scheme \cite{RFC2330}. We find that a major advantage of geometric sampling, i.e., memoryless, is that the covariance structure of $c_Y(\tau)$ is preserved in the observations as given in \eqref{eq:geometric_reconst}. This stands in contrast to periodic, Gamma and uniform sampling. In the following we continue the analysis with geometric sampling because of its advantages discussed above.
\subsection{Impact of finite sample sizes}
\label{subsec:impact_finite_sample}
Next, we examine the accuracy of the derived estimates for finite sample sizes as this is important for any practical realization. We determine the impact of sampling parameters, e.g., sampling duration or intensity, on the observations. Moreover, we evaluate the accuracy of the deployed statistical estimators. Finally, we recover the results from Sect.~\ref{subsec:sampling} in the limit for infinite sampling durations.

We investigate sample autocovariances marked by $\tilde{c}_{(\cdot)}$ as estimators of the population autocovariances $c_{(\cdot)}$. In addition, we consider the sample means $\tilde{\mu}_{(\cdot)}$ as estimators of the population means $\mu_{(\cdot)}$. To better understand the impact of finite sample sizes on the observations and the covariance estimates we examine the individual effects of the sample covariances involved in a step by step manner.

While geometric sampling is appealing since it's autocovariance $c_A(\tau) = 0$ for $\tau > 0$, it looses this property for finite sampling duration $T$, where $T$ is the length of the time-slotted sampling process $A(t)$ in slots.

In the following we focus on three aspects. First in subsection \ref{sect:Observation limit}, we derive the impact of finite sample sizes on the observability of the covariance of sampled traffic. The second aspect is the impact of the sample covariance $\tilde{c}_{A}(\tau)$ and its influence on the estimation error. This is handled in subsection \ref{sect:Estimation accuracy}. The third aspect is the impact of finite sample sizes on the bias of the covariance estimators given in subsection \ref{sect:bias_of_covariance_estimators}.
\subsubsection{\textbf{Observation limit}}
\label{sect:Observation limit}
In this section, we consider observations from finite sampling. At first, we do not consider deviations of sample statistics from respective population measures. This assumption is relaxed in the following subsections. We investigate the limit up to which the covariance of the observed LRD process $\tilde{c}_W(\tau)$ can be distinguished from the covariance of iid sequences of the same sample size. Obeying this limit ensures that the variability that is due to the sample size does not mask the covariance that we seek to observe. Exemplary, we depict in Fig. \ref{fig:sim_error_welt_1} the sample autocovariance $\tilde{c}_Y(\tau)$ of an LRD traffic trace $Y(t)$, and the corresponding autocovariance of geometrically sampled observations $\tilde{c}_W(\tau)$, both with a limited sample size $T$. Evidently, $\tilde{c}_W(\tau)$ is not just a shifted version of $\tilde{c}_Y(\tau)$ but distorted for increasing lags $\tau$ by observation ``noise"  that stems from the variability of the limited sample size.

We seek a range of lags $\tau \in [0,\tau^*]$ in which the covariance of the sampled process can be observed without significant distortion. Based on a standard technique \cite{beran94} we compare the covariance of the observed process to the covariance of geometrically sampled iid Gaussian sequences to obtain $\tau^*$ up to which both covariances are significantly different.
\begin{figure}[!t]
            \hspace{-9pt}
            \subfigure[Sampling]{
            \label{fig:sim_error_welt_1}
            \includegraphics[width=0.522\columnwidth]{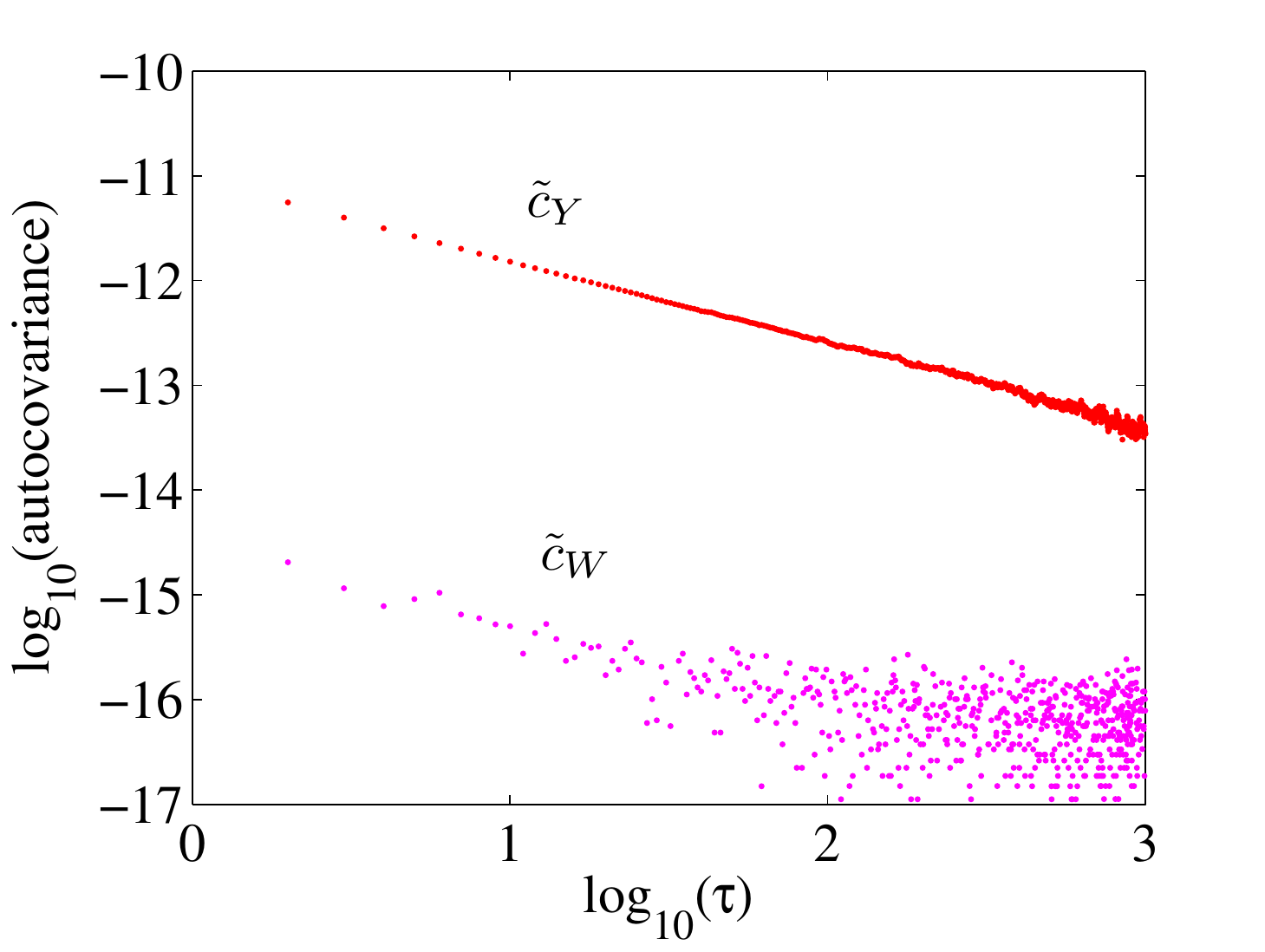}
            }
            \hspace{-23pt}
            \subfigure[Schematic description]{
            \label{fig:sim_error_welt_3}
            \includegraphics[width=0.522\columnwidth]{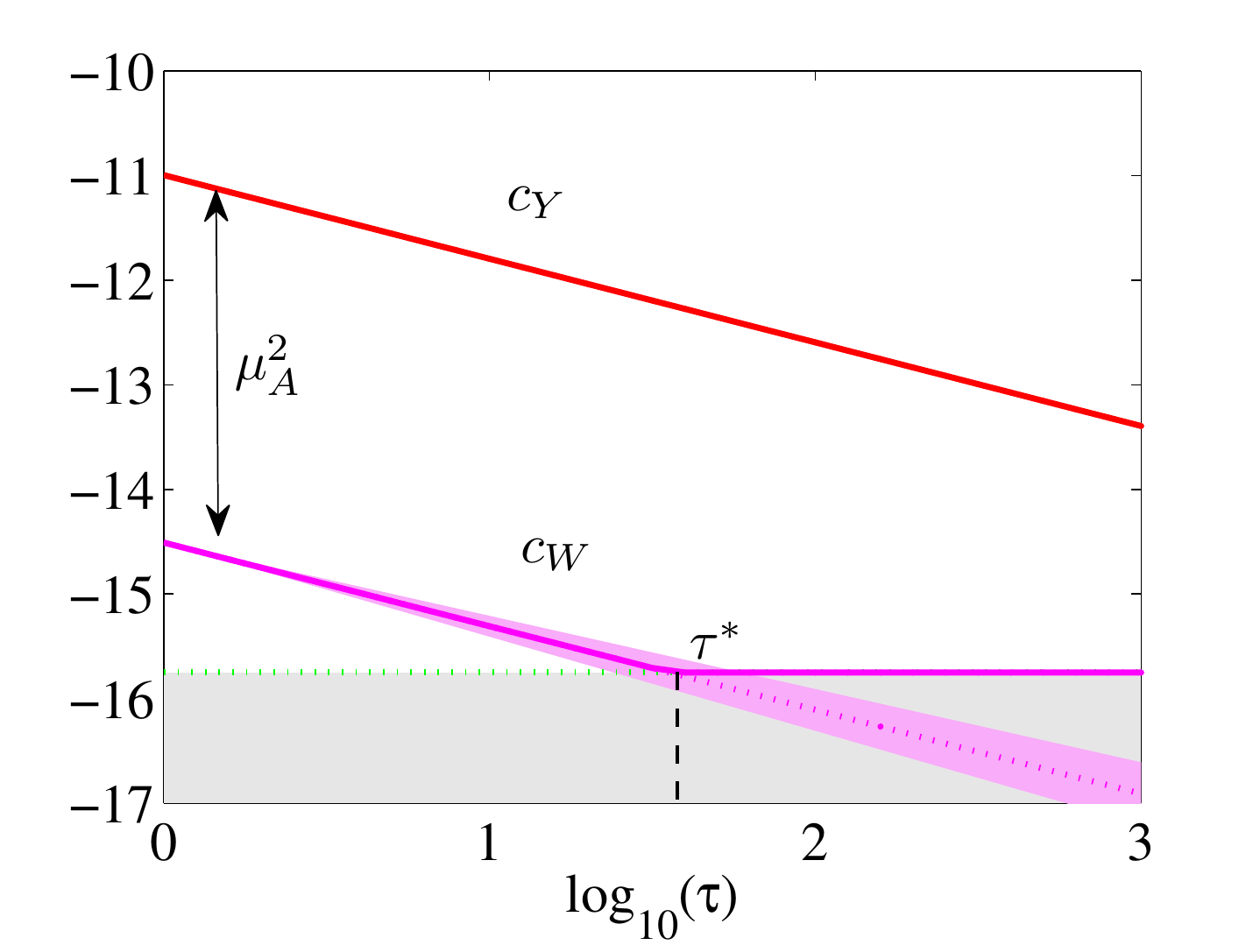}
            }
            \caption{Noisy observations due to finite sampling. Noise floor (shaded area) in Sect. \ref{sect:Observation limit}. Noise cone in Sect. \ref{sect:Estimation accuracy}. }
            \label{fig:observ_noise_poisson}
\end{figure}

We define $\tau^*$ as the intersection of $c_W(\tau)$ from~\eqref{eq:geometric_reconst} and the 0.95 confidence interval for geometrically sampled finite Gaussian iid sequences with mean $\mu_Y$ and variance $\sigma_Y^2$. For $T\gg\tau$ we find that this confidence interval is given by $2\sqrt{(\sigma_A^2\mu_Y^2+\mu_A\sigma_Y^2)^2 + 4\mu_A^2\mu_Y^2(\sigma_A^2\mu_Y^2+\mu_A\sigma_Y^2)}/\sqrt{T}$. The calculation relies on the central limit theorem and is given in detail in appendix~\ref{subsec:distr_iid_gaussian_sampled_CI}. Fig. \ref{fig:sim_error_welt_3} depicts $\tau^*$ as well as the confidence interval, denoted as noise floor, schematically.

To calculate $\tau^*$ for LRD traffic with covariance $c_Y(\tau)\!=\!K \sigma_Y^2 \tau^{2H-2}$, with constant $K$, we equate the above confidence interval with $c_W(\tau) = c_Y(\tau) \mu_A^2$ from~\eqref{eq:geometric_reconst} to obtain
\begin{eqnarray*}
  \tau^* \!\!=\! \left[\!\frac{K \sigma_Y^2 \mu_A^2 \sqrt{T}}{2 \sqrt{\! \left(\sigma_A^2 \mu_Y^2 + \mu_A \sigma_Y^2 \right)^2 + 4\mu_A^2\mu_Y^2 \left(\sigma_A^2 \mu_Y^2 + \mu_A \sigma_Y^2 \right)}} \!\right]^{\!\frac{1}{2-2H}}\!\!\!.
\end{eqnarray*}

It is obvious that stronger LRD, i.e., higher $H$, is observed better. Clearly, for an infinite sample size $T \rightarrow \infty$, the observable range goes to infinity $\tau^* \rightarrow \infty$. Fig.~\ref{fig:sim_error_welt_1} shows that in practice it is important to consider this range to ensure that the results are not strongly distorted.
\subsubsection{\textbf{Estimation accuracy}}
\label{sect:Estimation accuracy}

Next, we evaluate the impact of the finite sample size on the sample covariance $\tilde{c}_{A}(\tau)$. We analyze the influence of $\tilde{c}_{A}(\tau)$ on the observation $\tilde{c}_{W}(\tau)$ and of estimates of $c_{Y}(\tau)$ obtained thereof. For ease of exposition, we assume $\tilde{c}_{Y}(\tau)=c_{Y}(\tau)$, $\tilde{\mu}_{A}=\mu_{A}$ and $\tilde{\mu}_{Y}=\mu_{Y}$, i.e., in this subsection we restrict our analysis to the deviation of $\tilde{c}_{A}(\tau)$ from $c_{A}(\tau)$.

We assume $T \gg \tau$ and use the central limit theorem to approximate the distribution of the sample autocovariance $\tilde{c}_A(\tau)$ by a Gaussian distribution with standard deviation $\sigma_A\sqrt{\sigma_A^2+4\mu_A^2}/\sqrt{T - \tau}$. From the Bernoulli sampling process $A(t)$ we know that $\sigma_A^2 = \mu_A -\mu_A^2$. We calculate the $0.95$ confidence interval $ c^{.95}_A \approx \pm 2 \sigma_A\sqrt{\sigma_A^2+4\mu_A^2}/\sqrt{T - \tau}$ for the mean sample autocovariance\footnote{We use the relation $\approx$ to denote the approximation, here due to the central limit theorem.}. The derivation can be found in appendix~\ref{subsec:distr_poisson_sampling_CI}.

With help of $ c^{.95}_A$ we investigate the impact of the variations of $\tilde{c}_A(\tau)$ on the observation $\tilde{c}_W(\tau)$. First, we use $c^{.95}_A$ to calculate a confidence interval for $\tilde{c}_W(\tau)$ from Lem.~\ref{lem:sample_covariance} as $c^{.95}_W(\tau) = \pm c^{.95}_A \left(c_Y(\tau) + \mu_Y^2\right)$. We schematically depict $c^{.95}_W(\tau)$ as noise cone in Fig.~\ref{fig:sim_error_welt_3}.

Next, in reference to \eqref{eq:geometric_reconst} we consider the estimator $\tilde{c}_W(\tau)/ \mu_A^2$ for estimating $c_Y(\tau)$. We analyze the impact of the variations of $\tilde{c}_A(\tau)$ on this estimator. We calculate the confidence interval $c^{.95}_Y(\tau)$ for this estimator as $c^{.95}_Y(\tau) = \pm c^{.95}_A \left(c_Y(\tau) + \mu_Y^2\right) / \mu_A^2$. Finally, we obtain the following relative error
\begin{eqnarray}
\label{eq:cy_error_relative}
\varepsilon_Y^{rel}(\tau) = \frac{|c^{.95}_Y(\tau)|}{c_Y(\tau)} \approx \frac{2 \sigma_A\sqrt{\sigma_A^2+4\mu_A^2}}{\sqrt{T - \tau} \mu_A^2} \left(1+\frac{\mu_Y^2}{c_Y(\tau)}\right).
\end{eqnarray}

From \eqref{eq:cy_error_relative} we observe that the estimation error introduced through $\tilde{c}_A(\tau)$ decays with increasing sampling duration $T$ or with increasing sampling intensity $\mu_A$. For small (practical) sampling intensities, e.g., $\mu_A \le 0.1$, we find a nonlinear tradeoff between sample intensity $\mu_A$ and sampling duration~$T$. Using $\sigma_A^2 = \mu_A-\mu_A^2$ from the Bernoulli sampling process the prefactor in \eqref{eq:cy_error_relative} can be approximated as $1/\sqrt{T}\mu_A$ for $T \gg \tau$.  This result enables the important conclusion that for finite sample sizes sampling intensity has a stronger impact on accuracy than sampling duration.

Next, we examine the influence of the parameter $H$ on (\ref{eq:cy_error_relative}) for large lags $\tau$. For increasing $\tau$, $c_Y(\tau)$ decreases, such that when $c_Y(\tau) \ll \mu_Y^2$, the relative estimation error~\eqref{eq:cy_error_relative} becomes
\begin{eqnarray}
\label{eq:cy_error_relative_large_lag}
\varepsilon_Y^{rel}(\tau) & \approx & \frac{2 \sigma_A\sqrt{\sigma_A^2+4\mu_A^2}}{\sqrt{T - \tau} \mu_A^2} \left(\frac{\mu_Y^2}{c_Y(\tau)}\right) \nonumber\\
& \approx & \frac{2 \sigma_A\sqrt{\sigma_A^2+4\mu_A^2}\mu_Y^2}{\sqrt{T - \tau} \mu_A^2 \sigma_Y^2} \tau^{2-2H},
\end{eqnarray}
where we substituted $c_Y(\tau) = \sigma_Y^2 \tau^{2H-2}$. The relative estimation error $\varepsilon_Y^{rel}(\tau)$ increases with the lag $\tau$ depending on $H \in (0.5,1)$. For LRD traffic which exhibits large $H$, the estimation error increases slower in $\tau$ compared to traffic with a small parameter $H$.

We depict $\varepsilon_Y^{rel}(\tau)$ in Fig.~\ref{fig:ci_simulation_poisson}. To this end, we used $100$ generated LRD traffic traces with $T = 2 \times 10^8$ time slots. The figure includes auxiliary lines with a slope of $2-2H$. It is evident, that the estimation error evolves with $\tau$ as given by (\ref{eq:cy_error_relative_large_lag}).

\begin{figure}[t]
        \centering
            \hspace{-9pt}
            \subfigure[$H = 0.6$]{
            \includegraphics[width=0.5\columnwidth]{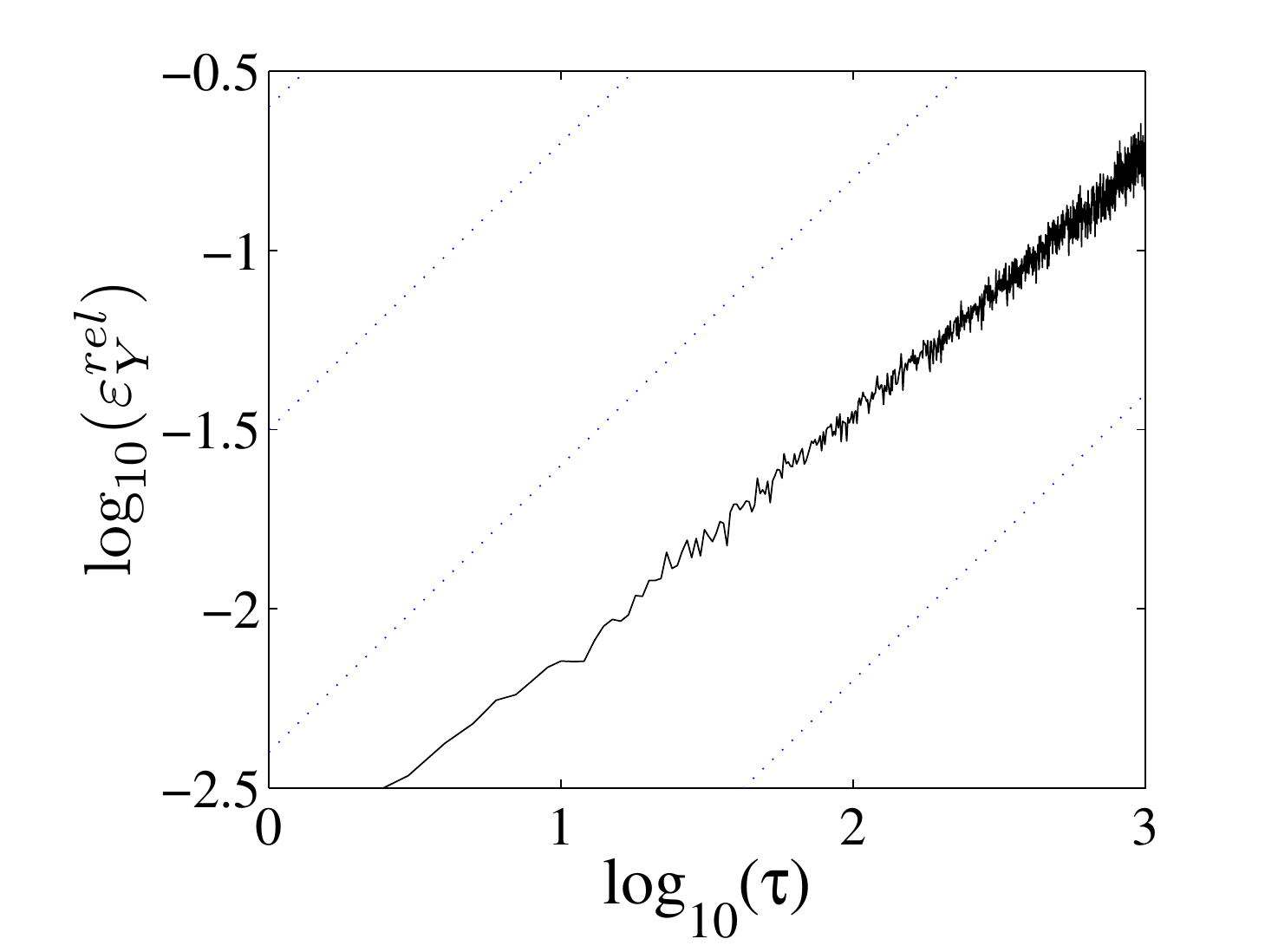}
            }
            \hspace{-23pt}
            \subfigure[$H = 0.8$]{
            \includegraphics[width=0.5\columnwidth]{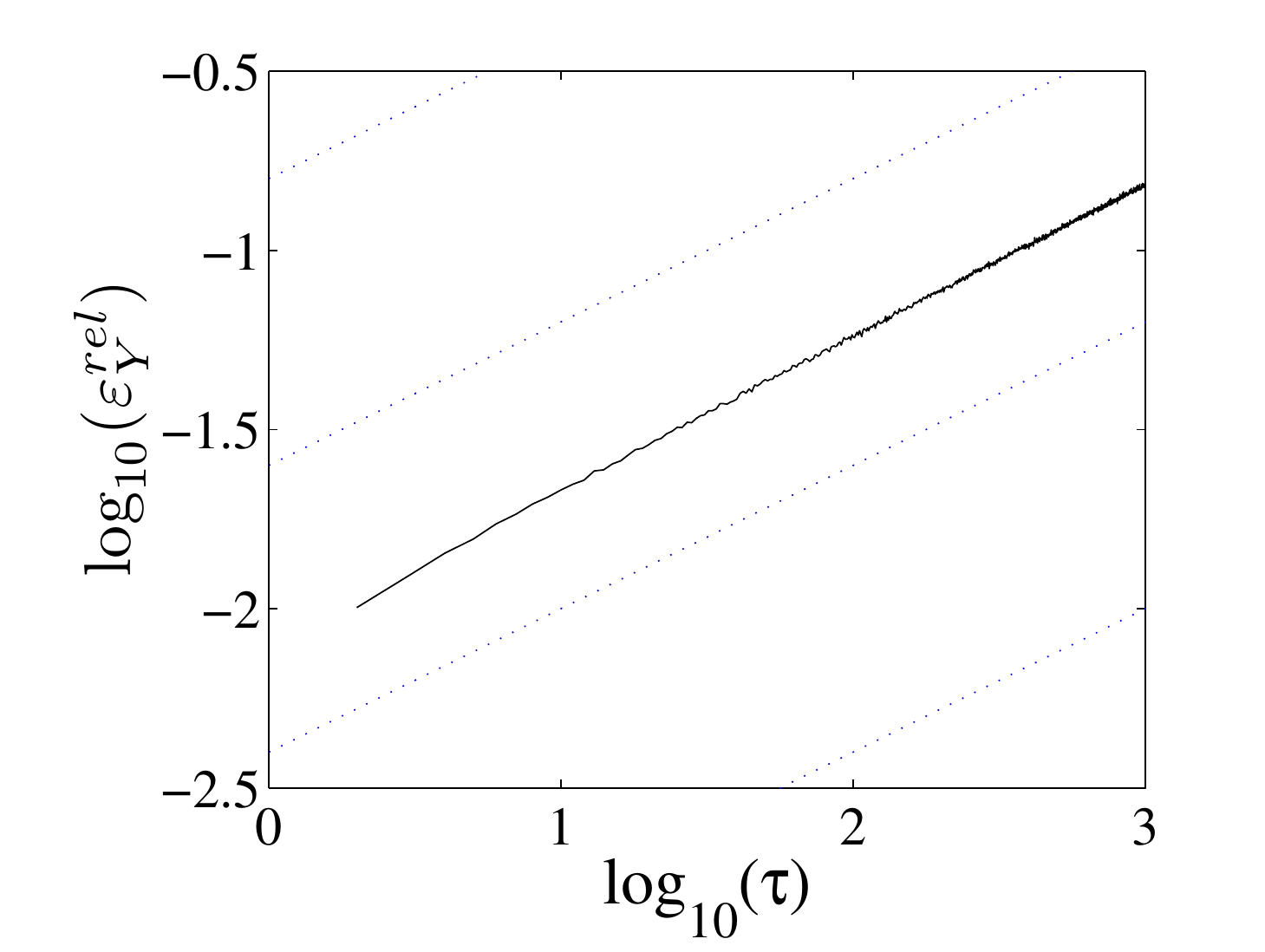}
            }
            \caption{Estimation error under finite sampling depends on $H$.}
	\label{fig:ci_simulation_poisson}
\end{figure}
In addition, we calculate the needed sampling duration $T$ to achieve constant $\varepsilon_Y^{rel} $ for a given lag~$\tau$, and fixed $\mu_A$, $\mu_Y$ and $\sigma_Y$. We find from \eqref{eq:cy_error_relative_large_lag} that the sampling duration has to increase as $ T \sim \max\{\tau^{4-4H},\tau\}$, which again reveals the impact of $H$. Specifically, for $H < 0.75$ the sampling duration has to increase faster than linearly with $\tau$ to achieve constant $\varepsilon_Y^{rel}$.
%
\subsubsection{\textbf{Bias of autocovariance estimators}}
\label{sect:bias_of_covariance_estimators}

Next we investigate the accuracy of the deployed statistical estimators. The impact of the finite sample size carries forward to the computation of the autocovariance of $Y(t)$. First we consider the case where we directly observe $Y(t)$ for a finite duration $T$. We consider the autocovariance estimator
$\tilde{c}_Y(\tau) = \frac{1}{T-\tau}\sum_{t=1}^{T-\tau}\left(y(t) - \tilde{\mu}_{Y_0} \right) \left(y(t+\tau) - \tilde{\mu}_{Y_\tau} \right)$ with $\tilde{\mu}_{Y_i} = \tfrac{1}{(T-\tau)} \sum_{t=1}^{T-\tau} y(t+i)$.
An estimator of the autocovariance is unbiased iff $\mathsf{E}\left[\tilde{c}_Y(\tau)\right] = {c}_Y(\tau)$. To inspect the bias of $\tilde{c}_Y(t)$, we calculate its expected value and find
\begin{eqnarray}
\label{eq:expected_value_standard_autocov_estimator}
\mathsf{E}\left[\tilde{c}_Y(\tau)\right] \approx c_Y(\tau) - \frac{\sigma_Y^2}{(T-\tau)^{2 -2H}}.
\end{eqnarray}
The derivation of~(\ref{eq:expected_value_standard_autocov_estimator}) is given in appendix~\ref{subsec:bias_autocov_y}.

From (\ref{eq:expected_value_standard_autocov_estimator}) we conclude that the autocovariance estimator $\tilde{c}_Y(\tau)$  is asymptotically unbiased for $T \rightarrow \infty$ and $T \gg \tau$. The maximum lag, up to which the autocovariance is estimated, must be chosen carefully, such that the bias in (\ref{eq:expected_value_standard_autocov_estimator}) becomes negligible. However, the bias depends on $H$ such that higher $H$ require larger $T$.

After considering the entire process $Y(t)$ we now investigate the bias of the autocovariance estimator when applied to $W(t)$ as observed by sampling with finite duration $T$. We calculate the expected value of the estimated autocovariance
\begin{eqnarray}
\mathsf{E}[\tilde{c}_W(\tau)] & \approx & c_W(\tau) - \frac{c_W(0)}{T-\tau} \nonumber\\ \label{eq:expected_value_autocov_W}
 & & - \frac{2}{(T-\tau)^2}\!\sum_{t=1}^{T-\tau-1} (T-\tau-t) c_W(t).
\end{eqnarray}
The derivation of (\ref{eq:expected_value_autocov_W}) is given in appendix~\ref{subsec:bias_autocov_w}. The bias in (\ref{eq:expected_value_autocov_W}) goes to zero for $T \rightarrow \infty$ and $T \gg \tau$.

In the remainder of this section we provide brief conclusions that highlight our main findings. We presented a framework for extracting the traffic autocovariance from observed samples. From our evaluation of the sampling distributions we conclude, that the covariance observed under geometric sampling does not exhibit any distortions. This property greatly simplifies the reconstruction of the covariance of the original process $Y(t)$, as no additional parameters, such as $\mu_Y$, must be estimated. Hence, for geometric sampling with sufficiently large $T$ we use $\tilde{c}_W(\tau)/\mu_A^2$ as an estimator of the traffic autocovariance. From the evaluation of the estimator we find two major aspects that limit the observability for finite sampling sizes. First, finite sampling sizes yield computable distortions given in Sect.~\ref{sect:Observation limit} and \ref{sect:Estimation accuracy} which may obscure the true covariance structure. Secondly, the bias for covariance estimators depends on the Hurst parameter, such that longer measurements must be conducted for traffic exhibiting strong LRD.

Nevertheless, finite sampling effects disappear in the limit for large sampling durations. Moreover, we found that increasing the probing intensity improves estimation results more quickly than increasing the sampling duration.
\section{Impact of sampling on selected $H$ estimators}
\label{sec:other}
In this section we analyze the impact of sampling on two established Hurst parameter estimation techniques, i.e., the aggregate variance, respectively, the spectral density method.
\subsection{Aggregate variance}
\label{subsec:aggregate_variance}
\begin{figure}[t]
        \centering
            \includegraphics[width=1.0\columnwidth]{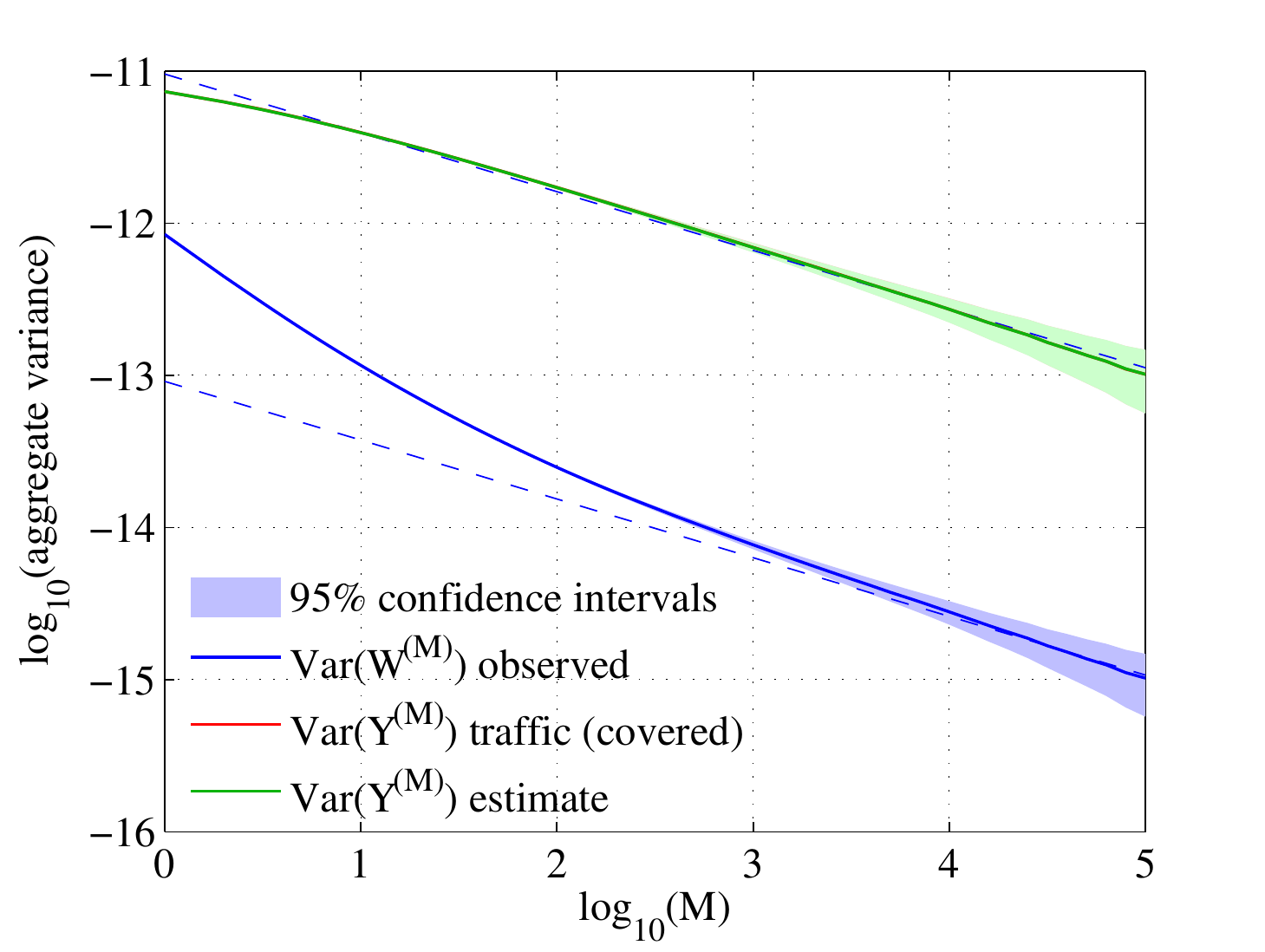}
        \caption{Aggregate variance estimate under geometric sampling. The variance $\text{Var}(Y^{(M)})$ of the traffic process $Y(t)$ is covered by the estimate obtained from \eqref{eq:agg_var_geo} using the variance of the observations $\text{Var}(W^{(M)})$ respectively the sampling process $\text{Var}(A^{(M)})$. }
	\label{fig:agg_var_sampling_periodic_lng_trace_H8}
\end{figure}
Briefly, the method exploits the fact that the variance of an LRD, self-similar process considered at different aggregation time scales decays linearly in $H$ on a log-log scale. As outlined in \cite{taqqu:estimators}, we divide an LRD process $Y(t)$ into blocks of size $M$, denoted aggregation level, and average within each block. The aggregate time series on the aggregation level~$M$ of $Y(t)$ is obtained as
\begin{eqnarray}
\label{eq:aggregation_general}
Y^{(M)}(k) = \frac{1}{M} \sum_{t = 1+(k-1)M}^{kM} Y(t) \quad \text{for} \quad  k \in \mathbb{N}.
\end{eqnarray}
The variance of the sample means is known to decay with the block size as $M^{2H-2}$. The Hurst parameter $H$ is obtained from the corresponding slope on a log-log scale.

The following lemma shows the impact of the sampling process on the aggregate variance of the observed process $W(t)$.
\begin{lemma}
\label{lem:agg_var}
Given \eqref{eq:process_sampling} and the aggregation rule \eqref{eq:aggregation_general}. For the aggregate processes $A^{(M)}$, $W^{(M)}$, and $Y^{(M)}$ it holds that
\begin{eqnarray*}
  \text{Var}(W^{(M)}) \!\!\!\! &=& \!\!\!\! \mu_Y^2 \text{Var}(A^{(M)}) + \mu_A^2 \text{Var}(Y^{(M)})\\
  & & \!\!\!\!+ \frac{1}{M}\sigma_Y^2 \sigma_A^2 + \frac{2}{M^2} \sum_{\tau=1}^{M-1} (M-\tau) c_Y(\tau)c_A(\tau).
\end{eqnarray*}
\end{lemma}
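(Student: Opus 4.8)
The plan is to reduce the claim to double sums of covariances and then substitute Lemma~\ref{lem:sample_covariance}. First I would write, for a generic block of $M$ consecutive slots and using stationarity,
\[
\text{Var}(W^{(M)}) = \frac{1}{M^2}\,\text{Var}\!\left(\sum_{t=1}^{M} W(t)\right) = \frac{1}{M^2}\sum_{s=1}^{M}\sum_{t=1}^{M} c_W(s-t),
\]
and likewise record that $\text{Var}(Y^{(M)}) = \frac{1}{M^2}\sum_{s,t} c_Y(s-t)$ and $\text{Var}(A^{(M)}) = \frac{1}{M^2}\sum_{s,t} c_A(s-t)$, where all double sums range over $s,t\in\{1,\dots,M\}$.

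Next I would insert the decomposition $c_W(\tau) = \bigl(c_A(\tau)+\mu_A^2\bigr)c_Y(\tau)+c_A(\tau)\mu_Y^2$ from Lemma~\ref{lem:sample_covariance} into the double sum. This splits $M^2\,\text{Var}(W^{(M)})$ into three pieces: the term $\mu_A^2\sum_{s,t}c_Y(s-t) = \mu_A^2 M^2\,\text{Var}(Y^{(M)})$; the term $\mu_Y^2\sum_{s,t}c_A(s-t) = \mu_Y^2 M^2\,\text{Var}(A^{(M)})$; and the cross term $\sum_{s,t}c_A(s-t)\,c_Y(s-t)$, which is what carries the remaining contributions.

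For the cross term I would pass to the lag variable $\tau=s-t$, noting that for each $\tau\in\{-(M-1),\dots,M-1\}$ there are exactly $M-|\tau|$ ordered pairs $(s,t)$ with $s-t=\tau$. Splitting off $\tau=0$ (contributing $M\,c_A(0)c_Y(0)=M\sigma_A^2\sigma_Y^2$ since $c_A(0)=\sigma_A^2$ and $c_Y(0)=\sigma_Y^2$) and folding the negative lags onto the positive ones via $c_A(-\tau)=c_A(\tau)$, $c_Y(-\tau)=c_Y(\tau)$ gives
\[
\sum_{s,t}c_A(s-t)\,c_Y(s-t) = M\sigma_A^2\sigma_Y^2 + 2\sum_{\tau=1}^{M-1}(M-\tau)\,c_A(\tau)\,c_Y(\tau).
\]
Dividing the assembled expression by $M^2$ then yields the stated identity.

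The argument is essentially bookkeeping, so no single step is a serious obstacle; the points that need care are (i) keeping track that independence of $A(t)$ and $Y(t)$ is exactly what lets $\mathsf{E}[A(s)A(t)Y(s)Y(t)]$ factor slotwise, so that Lemma~\ref{lem:sample_covariance} applies, and (ii) correctly counting the lag multiplicities $M-|\tau|$ together with the factor of two from symmetrising over negative lags. No asymptotics and no use of the LRD structure of $Y(t)$ are required — the result is an exact finite-$M$ identity.
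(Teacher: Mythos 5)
Your proposal is correct and follows essentially the same route as the paper's proof: both reduce $\text{Var}(W^{(M)})$ to a double sum of covariances, substitute the decomposition of $c_W(\tau)$ from Lem.~\ref{lem:sample_covariance}, and collect terms by lag with multiplicity $M-|\tau|$ (the paper merely converts to lag form before substituting rather than after). The bookkeeping, including the $M\sigma_A^2\sigma_Y^2$ diagonal contribution and the factor of two from symmetrising negative lags, matches the paper exactly.
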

The proof of Lem. \ref{lem:agg_var} is given in the appendix \ref{subsec:aggvar_of_sampled_process}. Lem. \ref{lem:agg_var} illustrates the relation between the variances of the three aggregate processes $A^{(M)}$, $W^{(M)}$ and $Y^{(M)}$. From Lem. \ref{lem:agg_var} we see the impact of sampling on the observed $\text{Var}(W^{(M)})$ through $c_A(\tau)$ and the parameters of $A(t)$, i.e., $\mu_A$ and $\sigma_A^2$.

Again the advantage of geometric sampling is apparent as $c_A(\tau)=0$ allows solving for $\text{Var}(Y^{(M)})$ directly using the variances of the sampling process $\text{Var}(A^{(M)})$, respectively, of the observed process $\text{Var}(W^{(M)})$ as
\begin{eqnarray}
\label{eq:agg_var_geo}
\text{Var}(Y^{(M)})\!\!\!& = &\!\!\!\frac{1}{\mu_A^2} \text{Var}(W^{(M)}) \nonumber \\ & & \!\!\! - \frac{1}{\mu_A^2} \left(\mu_Y^2 \text{Var}(A^{(M)}) + \frac{1}{M}\sigma_Y^2 \sigma_A^2\!\right).
\end{eqnarray}
The estimate $\text{Var}(Y^{(M)})$ in \eqref{eq:agg_var_geo} requires estimates of the mean and variance of the traffic process. These estimates can be obtained from $W(t)=A(t)Y(t)$ as $\mu_Y = \mu_W/\mu_A$ respectively $\sigma_Y^2 = (\sigma_W^2-\sigma_A^2\mu_Y^2)/\mu_A$ where we used that $\sigma_A^2+\mu_A^2=\mu_A$ for the Bernoulli sampling process $A(t)$.

Fig.~\ref{fig:agg_var_sampling_periodic_lng_trace_H8} depicts the decay of $\text{Var}(W^{(M)})$ as a function of~$M$ for geometrically sampled observations. We consider the same scenario and parameters as for Fig.~\ref{fig:xcov_poisson_sim} in Sect.~\ref{subsec:sampling}. Note that $\text{Var}(Y^{(M)})$ of the original process is covered by the estimated aggregate variance using \eqref{eq:agg_var_geo}, denoted ``$\text{Var}(Y^{(M)})$ estimate". A Hurst parameter estimate is deduced from the slope of ``$\text{Var}(Y^{(M)})$ estimate", i.e., $2H-2$. The correct slope for the evaluated $H=0.8$ is indicated in the figure by the dashed auxiliary lines.

For the remaining sampling distributions discussed in Sect.~\ref{subsec:sampling} the inversion of Lem.~\ref{lem:agg_var} for $\text{Var}(Y^{(M)})$ is not easily possible as $c_A(\tau) \neq 0$ such that the term that contains $c_Y(\tau)$ persists.

In general, as the block size $M$ increases, the impact of the terms in the second line in Lem.~\ref{lem:agg_var} diminishes, however. For $M \rightarrow \infty$ the relationship in Lem.~\ref{lem:agg_var} tends to
\begin{eqnarray*}
\text{Var}(W^{(M)}) \approx \mu_Y^2 \text{Var}(A^{(M)}) + \mu_A^2 \text{Var}(Y^{(M)}).
\end{eqnarray*}
In particular, for $M \rightarrow \infty$ the observed $\text{Var}(W^{(M)})$ tends to $\mu_A^2 \text{Var}(Y^{(M)})$. This is due to the fact that sampling processes $A(t)$ considered here are \emph{not} LRD. Hence, $\text{Var}(A^{(M)})$ decays with slope $-1$ with $M$ on a log-log scale, whereas $\text{Var}(Y^{(M)})$ decays with $2H-2$. This effect is visible in Fig.~\ref{fig:agg_var_sampling_periodic_lng_trace_H8} as $\text{Var}(W^{(M)})$ tends for increasing $M$ to the auxiliary dashed line of slope $2H-2$.
\subsection{Spectral density}
Spectrum based Hurst parameter estimators rely on the characteristics of the frequency domain representation of LRD processes. The spectral density of an LRD process $Y(t)$ possesses the behavior given in \eqref{eq:psd_fgn} \cite{taqqu:estimators,beran94}. Hence, $H$ can be estimated from the logarithm of the spectral density of $Y(t)$ plotted vs. $\log(f)$.

We rephrase Lem.~\ref{lem:sample_covariance} for the spectral density to find
\begin{eqnarray}
\label{eq:convolution_spectral_density}
\Psi_W(f) = \Psi_A(f) \ast \Psi_Y(f)
\end{eqnarray}
with $\Psi_{(\cdot)}(f)$ denoting the spectral density of the process $(\cdot)$. We use $\ast$ to denote the convolution defined as~$x\ast~\!\!y(t) := \int_{-\infty}^{\infty}x(\tau)y(t-\tau) d\tau$. To prove~\eqref{eq:convolution_spectral_density} we use the Wiener-Khinchin theorem, which states that the spectral density is given by the Fourier transform of the autocorrelation function. The autocorrelation of $W(t)=A(t)Y(t)$, i.e., $\mathsf{E}\left[W(t)W(t+\tau)\right]$, is obtained as the product of the autocorrelation functions of $A(t)$ and $Y(t)$. Finally, the Fourier transform of the product of two functions $\mathcal{F}\left\{x \cdot y\right\}(f)$ is given by the convolution of the respective Fourier transforms $\mathcal{F}\left\{x\right\} \ast \mathcal{F}\left\{y\right\}(f)$.

For ease of exposition we consider continuous time memoryless sampling, i.e., inter-sample times drawn from an exponential distribution with parameter $\lambda$. The spectral density of $A(t)$ is given, e.g., in \cite{cox66} as $\Psi_A(f)\!\!~=~\!\!\lambda^2 \delta(f)+\lambda$ with the well known Dirac delta function $\delta(f)$ that is defined as $\int_{-\infty}^{\infty} g(f) \delta(f) df = g(0)$ \cite{stirzaker01}. From \eqref{eq:convolution_spectral_density} we calculate the spectral density of the observed process $W(t)$ as
\begin{eqnarray}
\label{eq:psd_w}
\Psi_W(f) \!\!\!& = &\!\!\! \Psi_A(f) \ast \Psi_Y(f) \nonumber\\
\!\!\!& = &\!\!\! (\lambda^2 \delta(f)+ \lambda) \ast \Psi_Y(f) \nonumber\\
\!\!\!& = &\!\!\! \lambda^2 \Psi_Y(f) + \lambda \left(\sigma^2_Y + \mu^2_Y\right)
\end{eqnarray}
The convolution $\lambda\!\ast\!\Psi_Y(f)$  reduces to $\lambda\!\left(\sigma^2_Y\!+\!\mu^2_Y\right)$ using the Wiener-Khinchin theorem as $\lambda\int_{-\infty}^{\infty}\Psi_Y(f)df=\lambda\mathsf{E}\left[Y(t)^2\right]$. Consequently, the spectral density of $Y(t)$ can be estimated by solving (\ref{eq:psd_w}) for $\Psi_Y(f)$. The estimate $\Psi_Y(f) $ requires estimates of the mean and variance of the traffic process, i.e., $\mu_Y$ and $\sigma_Y^2$ respectively, that can be obtained as in Sect.~\ref{subsec:aggregate_variance}. An estimate of $H$ is obtained from the slope of $\Psi_Y(f)$ on a log-log scale.

Next, we consider periodic sampling in conjunction with the spectral density method. For periodic sampling it is known that the sampling frequency has to be twice the highest frequency contained in the sampled process to avoid aliasing. This is known as Nyquist criterion. For periodic sampling $\Psi_A(f)$ is a dirac comb with inter-dirac distances of $1/\Delta$ where $\Delta$ is the sampling period. This leads to a repetition of the spectrum $\Psi_Y(f)$ at distances $1/\Delta$. An \emph{irreversible} spectral overlap occurs if the Nyquist criterion is not met, which is the case here as $\Psi_Y(f)$ is not band limited. As a result the method of spectral estimation of the Hurst parameter cannot be used directly with periodic sampling. Also, the remaining sampling strategies from Sect.~\ref{subsec:sampling} may not yield an expression for $\Psi_W(f)$ in (\ref{eq:psd_w}) that can be solved for $\Psi_Y(f)$.

As in Sect.~\ref{sec:sampling} we find that geometric sampling provides a substantial advantage when deploying established $H$ estimators to sampled observations. For the considered methods, \eqref{eq:agg_var_geo} and \eqref{eq:psd_w} provide the measures needed for $H$ estimation.
\section{Active probing}
\label{sec:network_probing}
So far, we focused on the estimation of traffic correlations using passive sampling. In large multi-provider networks like the Internet, service providers often do not provide such network traces, e.g., for reasons of competition. The estimation of traffic correlations, therefore, must rely on inferring samples of the Internet traffic from network metrics that can be easily observed at end systems, e.g., by active probes. Moreover, passive sampling is a priori limited to single links. In case of network paths, where the correlations of the end-to-end service involve multiple nodes and links, end-to-end measurements may be the only viable option. We present an active probing method that enables users to characterize end-to-end paths, with minimal effort and without administrative support from the  network under observation.

In this section, we address the fundamental problem of inferring the correlation of LRD traffic using active probes. We propose a new active probing method which collects traffic samples by detecting router busy periods. The observations are used to estimate the covariance of the end-to-end service. Subsequently, we estimate the corresponding Hurst parameter.
Furthermore, we show that the well known packet pair dispersions approach, which captures the traffic intensity at the ingress of a router, is also applicable for the derivation of LRD traffic correlations.
In the sequel, we describe our probing methodology and discuss traffic correlation estimation for both the single and multi-node cases. We then show testbed measurements to demonstrate the feasibility of our method. Finally, we present a set of Internet measurement results showing end-to-end correlations of entire network paths.

\subsection{Probing methodology}
\label{sec:packet-delay-probing}
We investigate two probing methods that facilitate the inference of certain characteristics of network traffic, referred to as cross traffic. The two methods differ with respect to the probes, i.e., single packets and packet pairs, and the observed metric, i.e., delay and packet pair dispersion, respectively.
\subsubsection{Single packet probes}
\label{sec:single-packet-probes}
To extract an estimate of the cross traffic autocovariance, we propose an approach which uses the delays of single packet probes to detect busy periods at a router, and hence samples the link utilization at the router egress. For the remainder of this work, cross traffic denotes any traffic sharing resources with the probing traffic.

We make the general assumption that packet scheduling is non-preemptive. Hence, whenever a router is busy transmitting a packet, the delay $d_p$ experienced by an arriving packet will be greater than the minimal delay $d_{min}$ experienced when the router is idle. Consequently, we can sample cross traffic increments at the router egress, by injecting probe packets and analyzing their delays. For each probe, we measure the one way delay $d_p=t_r-t_s$, using the send and receive times $t_s$ and $t_r$, respectively. To determine if the router was busy, we check whether the observed delay $d_p$ is greater than the minimum network delay $d_{min}$. As a result, each probe yields a sample of the egress link state at time $t=t_s$, and the observed process can be constructed as
\begin{eqnarray}
  \label{eq:vector_W_rtt}
  W(t) = \left\{
    \begin{array}{l l}
      1 & \quad \mbox{if $d_p > d_{min}$ and $A(t)=1$}\\
      0 & \quad \mbox{otherwise.}\\
    \end{array} \right.
\end{eqnarray}
It is known \cite{erramilli96,ganesh:04}, that the covariance structure of LRD traffic is preserved at the output of a queue or a traffic shaper, such that $W(t)$ permits observing the covariance of the cross traffic. We assume that the perturbation of the observed traffic due to probe size and probing rate is negligible, since the probing rates used are typically less than one per mill of the capacity.
Furthermore, as we can assume that dropped probes are due to a busy router, we account for lost probing packets by setting $W(t)=1$ for all dropped probes of $A(t)$.
\subsubsection{Packet pair probes}
\label{sec:packet-pair-probes}
In the following we present a second probing technique to estimate the cross traffic autocovariance that uses the dispersion of packet pair probes. Packet pair probing is a popular method for estimating the available bandwidth of a bottleneck link with FIFO queueing, e.g.,~\cite{spruce}. A probe consists of two packets, each of size $L$, that are sent into the network back-to-back. The gap between the packet send times $t_{s,1}$ and $t_{s,2}$ is set to $g_s=t_{s,2}-t_{s,1}=L/C$, where $C$ is the capacity of the bottleneck link. The subscript denotes the packet location and number, e.g., $t_{s,1}$ denotes the time stamp of the first packet of the probe pair at the sender. The packet pair dispersion $g_r=t_{r,2}-t_{r,1}$ observed at a receiver yields a sample of the cross traffic intensity $W(t)$ at $t=t_{s,1}$ as~\cite{spruce}
\begin{eqnarray}
  \label{eq:spruce}
  W(t) = C \left(\frac{g_r - g_s}{g_s}\right).
\end{eqnarray}
We set $W(t)=0$ if no probe is sent at $t$ as well as if probes are lost or incomplete at the receiver.

We apply this method to infer cross traffic intensities by injecting packet pair probes at times $t_s$ and measuring the packet dispersion $g_r$ at the probe receiver. Note that the multiplicative constant $g_s/C$ in~\eqref{eq:spruce} does not alter the covariance structure. Hence, we can drop $g_s/C$ and estimate the covariance of the cross traffic process from $W(t) = g_r-g_s$ if $A(t)=1$ and zero otherwise without knowing the absolute value of the bottleneck link capacity.

The packet pair method requires a sufficiently accurate time stamp resolution at the receiver to correctly capture the variations in $g_r$. As the first packet functions as a time reference, while the traffic intensity is sampled by the second packet, the approach does not require a synchronization between the sender and receiver clocks. However, this results in an overhead associated with each probe. As an example, when equally sized packets are used, $50\%$ of the probing load is ``wasted'', thereby reducing the effective sampling resolution for a given probing rate by a factor of two. In addition, the extension of packet pair probing to the multi-node case is not trivial. Therefore, we proceed using the first method, i.e., by detecting router busy periods.
\subsection{Measuring LRD in single- and multi-node scenarios}
\label{sec:multi-hop}
In this subsection we consider estimating traffic correlations in multi-node scenarios using the busy period detection technique \eqref{eq:vector_W_rtt}. We now show that the observed process $W_{N}(t)$ at the egress of an $N$ node path with LRD cross traffic also exhibits LRD behavior. Moreover, for cross traffic characterized by different Hurst parameters, we show that the largest Hurst parameter dominates the covariance of the observed process $W_{N}(t)$. These results are in agreement with~\cite{ganesh:04} which shows that the largest Hurst parameter dominates end-to-end performance.

\begin{figure}[t]
  \center
  \includegraphics[width=0.95\columnwidth]{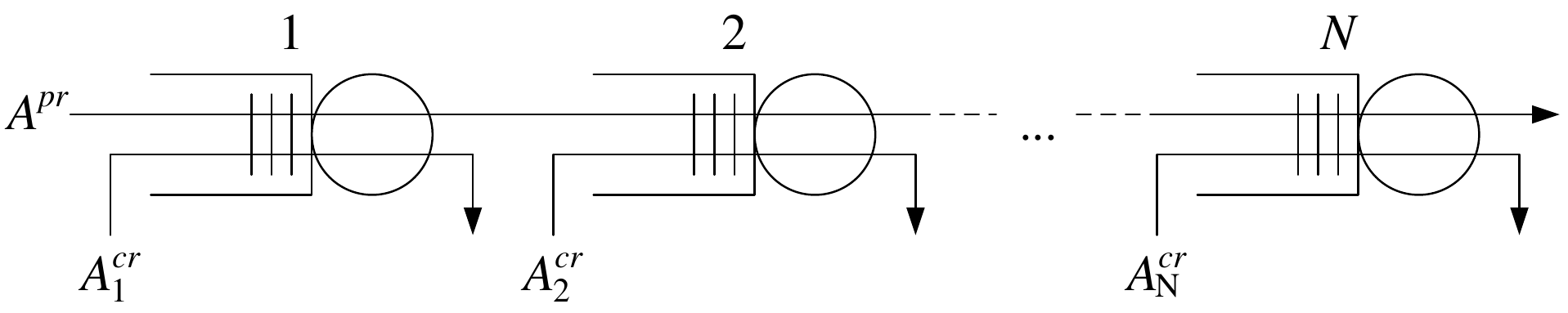}
  \caption{$N$ node topology with probing traffic $A^{pr}$ and LRD cross traffic $A_{i}^{cr}$ for nodes $i \in [1,N]$.}
  \label{fig:scenario}
\end{figure}
Consider an $N$ node topology with independent LRD cross traffic as in Fig.~\ref{fig:scenario}. We describe the busy state of each node using the processes $Y_i(t)$ for node $i \in [1,N]$. Hence, $Y_i(t)=1$ if node $i$ is busy at time $t$ and $Y_i(t)=0$ otherwise. Note that the covariance $c_{Y_i}(\tau)\sim \tau^{2H_i-2}$ measured at the egress of node $i$ has the same LRD property as the cross traffic input at the node \cite{erramilli96,ganesh:04}. Next, consider an active probe that is injected into the path. After subtracting the minimum end-to-end delay $d_{\min}$ the observer at the egress of the path will measure a positive delay only if any of the routers was busy when the probe arrived at the respective router. Otherwise, the probe delay will equal zero. Hence, $W_N(t)$ is the logical OR operation of the individual processes $Y_{i}(t)$ for $i \in [1,N]$. Since $Y_i(t)$ and $W_i(t) \in \{0,1\}$, we straightforwardly find $W_i(t)$ at the egress of node $i$ as
\begin{eqnarray}
\label{eq:Nhopdelayformula}
\!W_i(t) = \left\{
  \begin{array}{l l}
    \!\!Y_1(t)&\!\!\!\text{, } i=1\\
    \!\!W_{i-1}(t)\!+\!Y_i(t)\!-\!W_{i-1}(t)Y_i(t)&\!\!\!\text{, } i \in[2,N].\\
  \end{array} \right.
\end{eqnarray}
For ease of exposition,~\eqref{eq:Nhopdelayformula} assumes that a node that is idle forwards probe packets instantaneously to the next node, such that the probe packet observes $Y_i(t)$ at the same time instance $t$ for all $i \in [1,N]$. Dispensing with this assumption,~\eqref{eq:Nhopdelayformula} can be formulated in the same way requiring, however, additional notation as a probe packet observes $Y_i(t)$ at $t = t_i$ where $t_i \ge t_{i-1}$ for $i \in [2,N]$.

First, we illustrate \eqref{eq:Nhopdelayformula} using a two node example and two independent LRD processes $Y_1(t)$, $Y_2(t)$. The observed process at the egress of node~$2$ is $W_2(t)=1$ if $Y_1(t)=1$ OR $Y_2(t)=1$ and $W_2(t)=0$ otherwise, such that we deduce
\begin{eqnarray*}
  W_2(t) = Y_1(t) + Y_2(t) - Y_1(t)Y_2(t).
\end{eqnarray*}
We derive the observed covariance~$c_{W_2}(\tau)$ of $W_2(t)$ after some algebra as
\begin{multline*}
\!\!c_{W_2}(\tau)\!=\!c_{Y_1}(\tau)c_{Y_2}(\tau) + c_{Y_1}(\tau)(1 - \mu_{Y_2})^2 + c_{Y_2}(\tau)(1 - \mu_{Y_1})^2.
\end{multline*}
The equation above directly shows that for large $\tau$ the covariance $c_{W_2}(\tau)$ is dominated by $c_{Y_i}(\tau)$ with the largest Hurst parameter, i.e., slowest decay. The covariance of the $N$-node end-to-end observations $c_{W_N}(\tau)$ is obtained using the recursion formula \eqref{eq:Nhopdelayformula} as
\begin{multline}
\label{eq:Nhop_cov}
\!\!c_{W_i}(\tau)\!=\!c_{W_{i-1}}(\tau)c_{Y_i}(\tau) \\ + c_{W_{i-1}}(\tau)(1 - \mu_{Y_i})^2 + c_{Y_i}(\tau)(1 - \mu_{W_{i-1}})^2.
\end{multline}
Using recursive substitution, it can be shown that the covariance of the end-to-end observations $c_{W_N}\!(\tau)$ is dominated by $c_{Y_i}(\tau)$ with the largest Hurst parameter for $i\!\in\![1,N]$.
\subsection{Probing Software and Experimental Setup}
\label{sec:hurstping}
We developed a probing tool \emph{H-probe}, available at~\cite{hprobe:software}, that performs online measurements to infer the covariance structure of the round trip service of network paths. \emph{H-probe} injects ICMP echo request probes from the sender to the receiver and captures the associated round trip times using \emph{libpcap}. In contrast to the (optional) client/server delay measurements using UDP, the use of ICMP probes is significantly more practical, as it circumvents clock synchronization issues and enables probing the path to any network host without the need for a receiver software. \emph{H-probe} uses the method described in Sect.~\ref{sec:packet-delay-probing} and the statistical analysis discussed in Sect.~\ref{subsec:sampling} and optionally the $H$ estimation technique from Sect.~\ref{subsec:aggregate_variance}. For $H$ estimation using the aggregate variance method a range of scales has to be fixed for the estimation \cite{beran94}. A similar observation is made in the context of Hurst parameter estimation using wavelet decomposition \cite{Veitch03}. The authors in~\cite{Veitch03} describe upper and lower cutoff bounds on the time scales considered for the estimation using wavelets. Similarly, we fix the range of scales $M$ for the $H$ estimation with upper and lower cutoffs $M^{up}$ resp. $M^{low}$. We define the upper range end $M^{up}$ depending on the sample size $T$, i.e., $T/M^{up}=10^{2}$ to ensure enough points for the variance calculation at $M=M^{up}$. We fix $M^{low}=0.1$~s. This range is consistent with the ranges reported in trace driven $H$ estimation literature, e.g., \cite{veitch:05:multifractal,gupta:09:coexistance}.

\begin{figure}[t]
  \center
  \includegraphics[width=0.95\columnwidth]{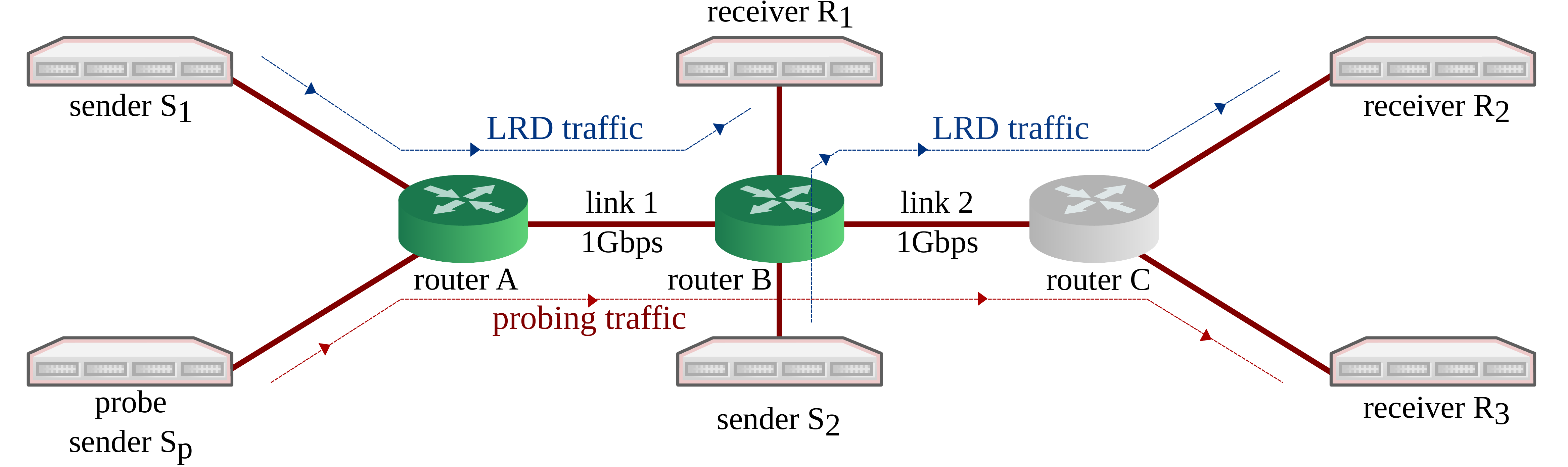}
  \caption{Experimental setup: Emulab testbed}
  \label{fig:topology1}
\end{figure}
\begin{figure}[t]
\includegraphics[width=1.00\columnwidth]{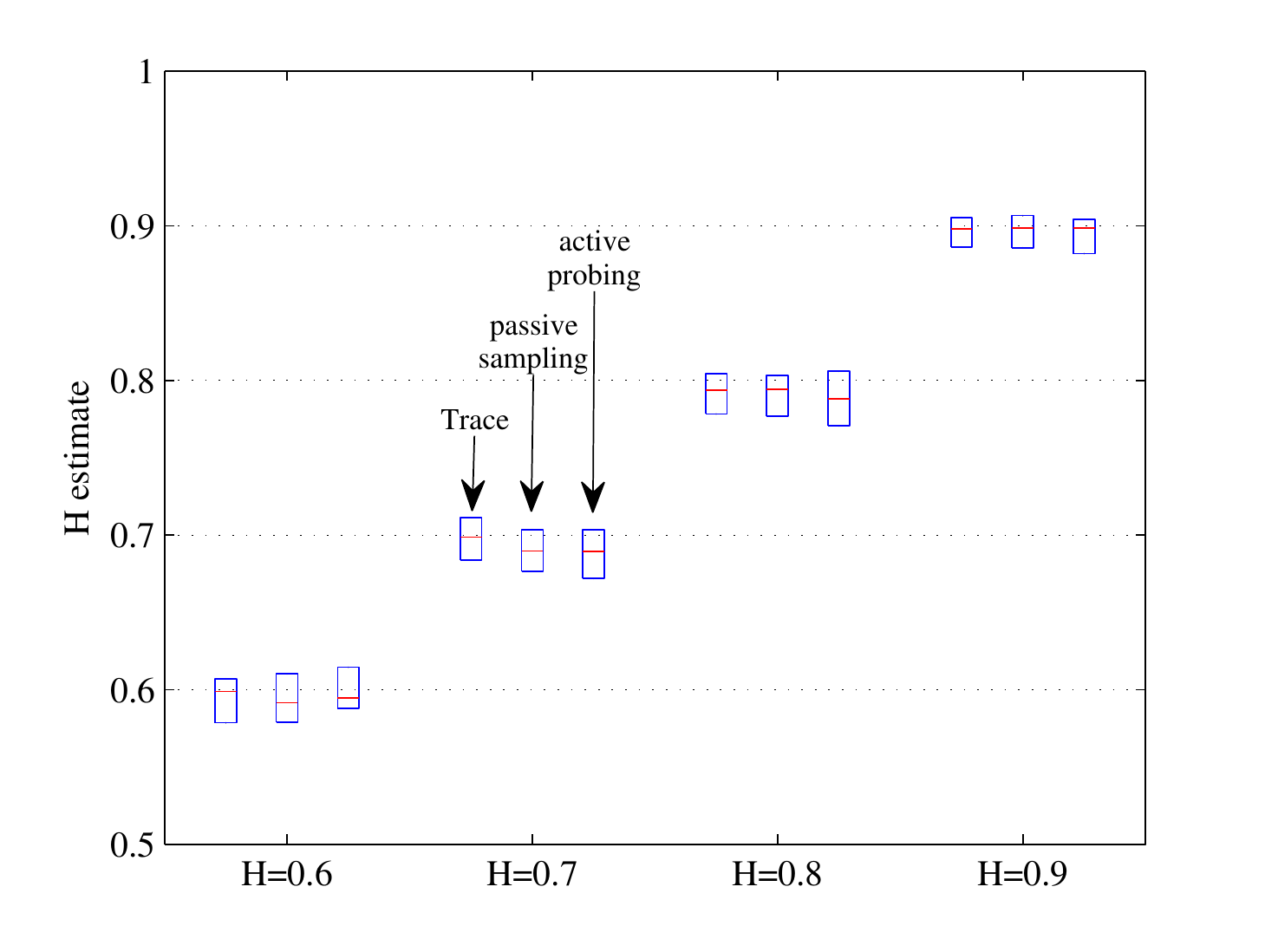}
   \caption{Hurst parameter estimates from (a) offline trace analysis, (b) offline trace sampling and (c) active probing in the Emulab testbed.}
   \label{fig:boxplot_1hop}
\end{figure}
\begin{figure*}[t]
            \centering
            \subfigure[planetlab1.cis.upenn.edu]{
              \label{fig:xcov_upenn}
              \includegraphics[type=pdf,ext=.pdf,read=.pdf,scale=0.41]{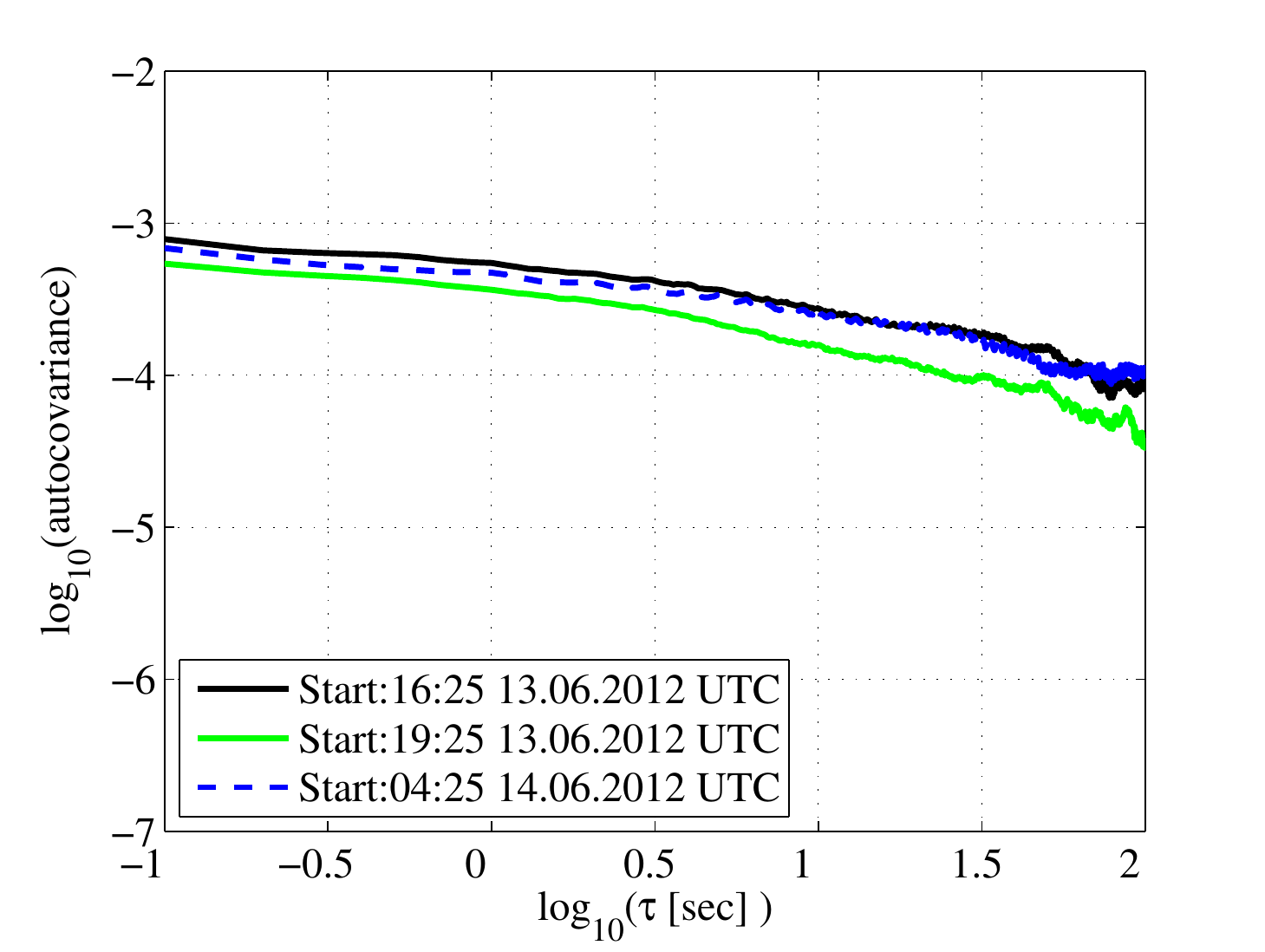}
            }
            \hspace{-20pt}
            \subfigure[planetlab01.sys.virginia.edu]{
              \label{fig:xcov_virginia} \includegraphics[type=pdf,ext=.pdf,read=.pdf,scale=0.41]{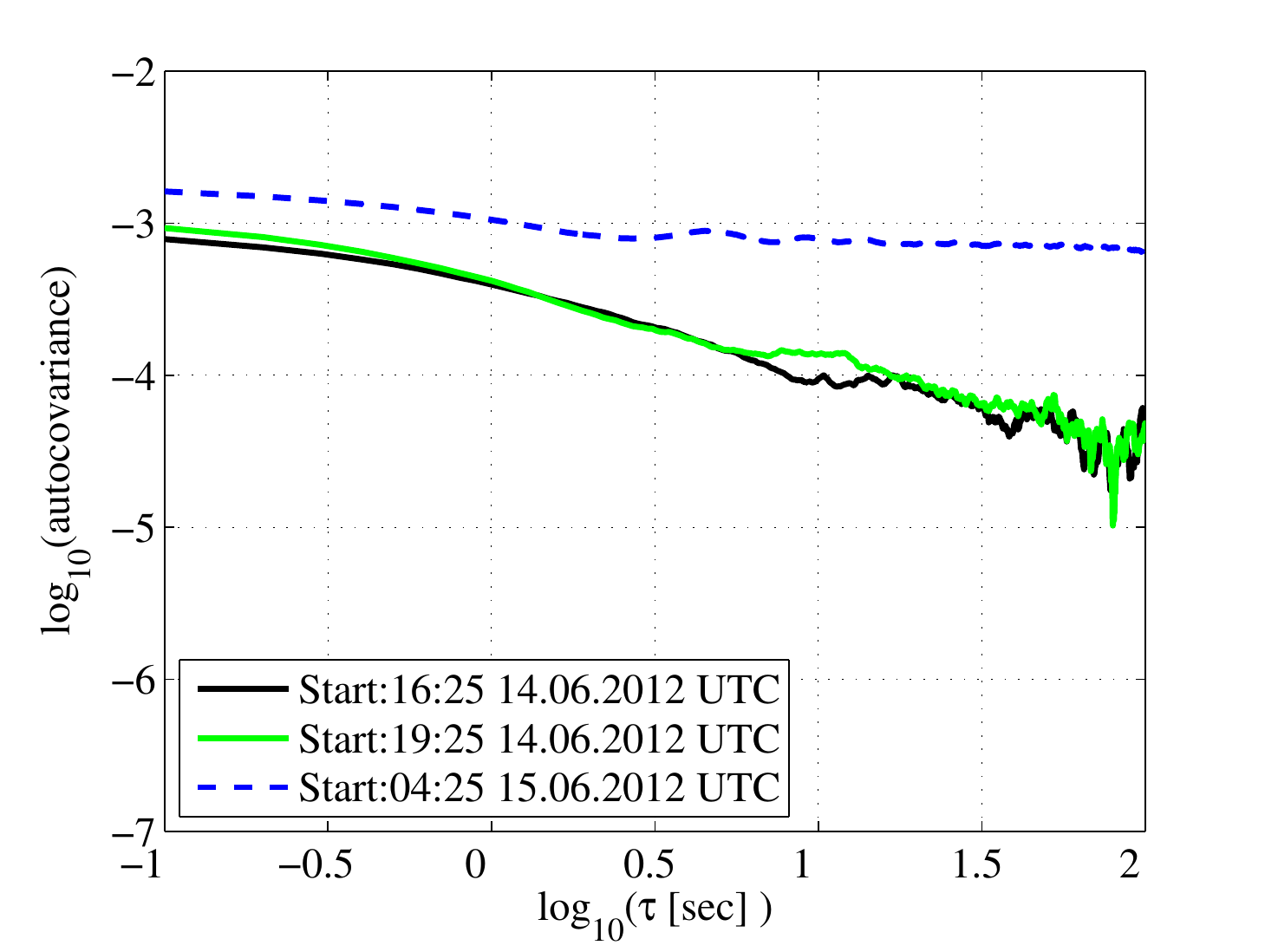}
            }
            \hspace{-20pt}
            \subfigure[planetlab1.cs.umass.edu]{
              \label{fig:xcov_umass}
              \includegraphics[type=pdf,ext=.pdf,read=.pdf,scale=0.41]{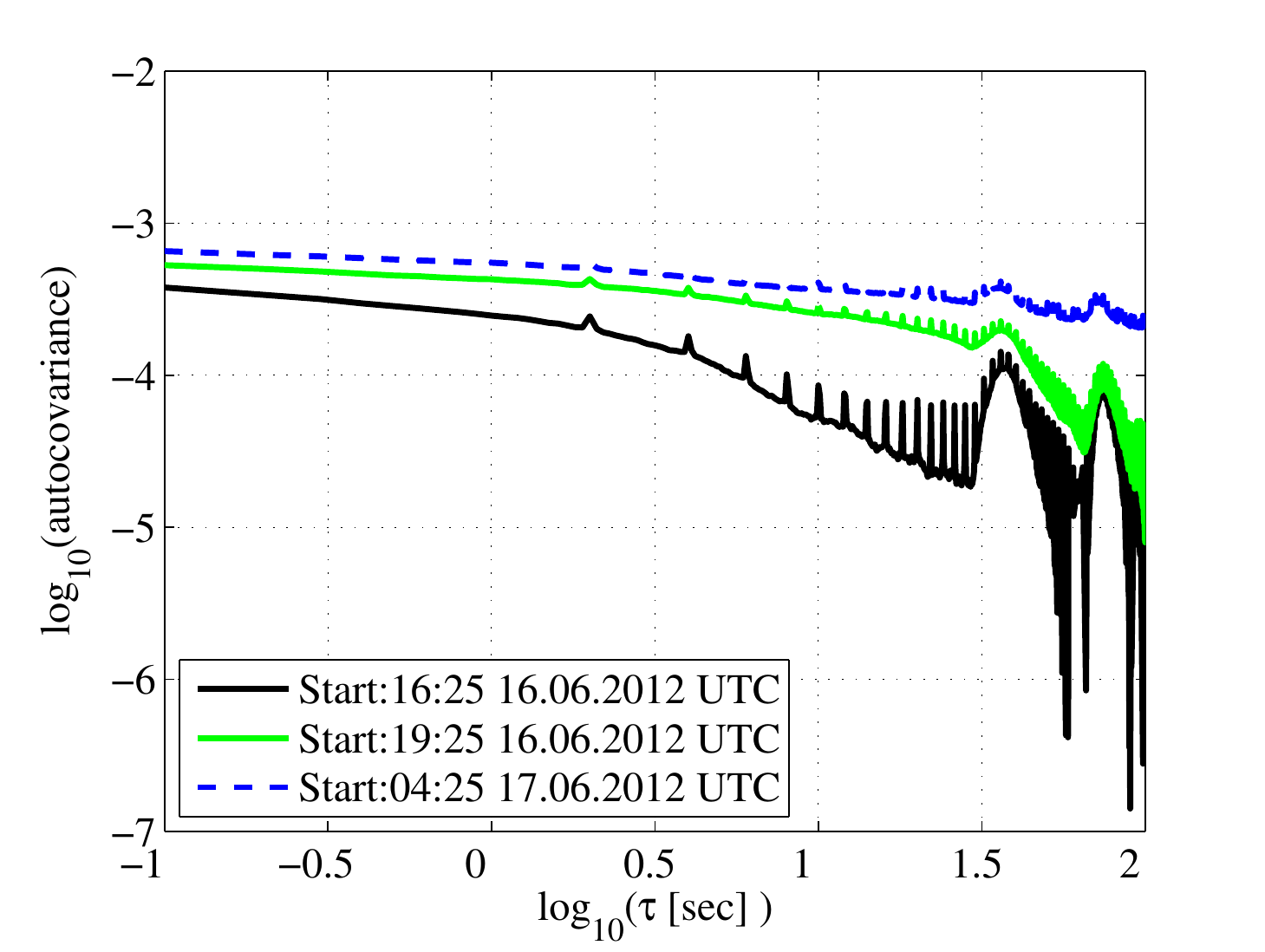}
            }
            \caption{End-to-end covariance estimates from Internet measurements. The covariance structure varies across different paths and for different times. For some targets we observe distinct periodicities on different time scales.}
            \label{fig:xcovsInternet}
\end{figure*}
In the following we present results obtained using this software package. Fig.~\ref{fig:topology1} depicts the experimental setup in our Emulab-based testbed\footnote{We use nodes with Supermicro X8DTU server mainboards   with 2.2Ghz Intel E5520 Xeon processors, quad port Intel 82576EB   Gigabit Ethernet Controllers, and Ubuntu 10.04 LTS with kernel   2.6.32-24, FIFO scheduling and buffers for 5000 packets. All links have a capacity of $C=1$Gbps.}. The topology comprises two relevant links, denoted link $1$ and $2$. Two traffic senders $S_i$, $i \in [1,2]$ transmit LRD cross traffic traces with defined Hurst parameter $H$ to the receivers $R_i$. The traces were synthesized by superposition of $10^5$ heavy tailed on-off sources with tail index $\alpha$. The relation between $H$ and the tail index $\alpha$ is given in \cite{willinger97}. We set the mean rate of the traffic at \emph{each} sender to $50$~Mbps, with a constant packet size of $1500$~Byte.

We use geometrically distributed inter-sample times with $p=0.1$ and slot length $\delta=1$~ms. For each measurement we send $10^6$ probes with a mean probing rate of $100$~packets per second (corresponding to $\sim70$~kbps) from the probe sender $S_p$ to the receiver $R_3$. We use the same parameters for the Internet measurements, substituting $R_3$ with PlanetLab nodes. To deal with non-queueing induced jitter in routers, \emph{H-probe} substitutes $d_{min}$ in \eqref{eq:vector_W_rtt} by the average $\mathsf{E}[d]$. This significantly reduces the measurement noise, because we can assume the distribution of this jitter is light tailed. We set the length of each measurement to $3$ hours over which we assume stationarity of the traffic processes.


\subsection{Testbed measurements}
\label{sec:Testbed_measurements}
We deploy \emph{H-probe} in our Emulab testbed, in order to verify its functionality in a controlled environment. First, we inject synthetic LRD traffic with $H \in [0.6, 0.9]$ on link $1$ and collect $10^6$ samples using our software. We repeat each experiment $25$ times. We compare the covariance of the full traffic traces calculated offline (denoted trace) to the covariance extracted offline from a sampled process (denoted passive sampling) as well as from probes using \emph{H-probe} (denoted active probing). To this end, we estimate the Hurst parameter using a least square regression of the estimated covariance on lags $\tau \in[10^0,10^3]$. The lag range for the regression as well as the probing process parameters are chosen according to the constraints in Sect.~\ref{subsec:impact_finite_sample}. We show boxplots of the corresponding Hurst parameters in Fig.~\ref{fig:boxplot_1hop}. It is evident that \emph{H-probe} correctly estimates the configured Hurst parameters.

We exemplary deploy the packet pair dispersion method described in Sect.~\ref{sec:packet-pair-probes} in our Emulab testbed using the topology in Fig.\ref{fig:topology1}. We measure the send and receive times of the probes using an Endace DAG packet capture card attached to network taps at the outgoing resp. incoming ports at $S_p$ respectively $R_3$. The recorded time stamps $t_s,t_r \in \mathbb{R}$ have a $7.5$~ns capture precision. Tab.~\ref{tab:ppair_results} includes the mean of estimated $H$ over $25$ runs. The results indicate that capturing cross traffic intensities using packet pairs can be successfully used for Hurst parameter estimation.

\begin{table}[t]\normalsize
\centering
\caption{Mean Hurst parameter estimates using packet pair probes.}
\label{tab:ppair_results}
\begin{tabular}{c|c|c|c|c|}
\cline{2-5}
& \multicolumn{4}{|c|}{configured $H$} \\ \cline{2-5} \cline{2-5}
& 0.6 & 0.7 & 0.8 & 0.9  \\ \cline{1-5}
\multicolumn{1}{|c|}{estimated $H$} & \multirow{1}{*}{0.64}  & \multirow{1}{*}{0.71}  & \multirow{1}{*}{0.79}  & \multirow{1}{*}{0.87}   \\ \cline{1-5}
\end{tabular}
\end{table}

In another experiment we inject LRD traffic with differing $H$ along links $1$ and $2$ denoted $H_1$ and $H_2$ respectively. In Tab.~\ref{tab:H6H9_results} we show exemplary Hurst parameters obtained for all combinations of $H_1 = \{0.6, 0.9\}$ and $H_2 = \{0.6, 0.9\}$ as obtained from single packet probes. We note that our method correctly characterizes the dominant correlations, respectively, H along end-to-end paths from a probing rate of as low as 70~kbps.
\begin{table}[t]\normalsize
\caption{Exemplary Hurst parameter estimates in a $2$ node scenario using single packet probes.}
\label{tab:H6H9_results}
\begin{tabular}{c|c|c|c|c|c|}
\cline{2-6}
& \multicolumn{5}{|c|}{estimated $H$ on run $\#$} \\ \cline{2-6} \cline{2-6}
& 1 & 2 & 3 & 4 & 5  \\ \cline{1-6}
\multicolumn{1}{|c|}{\!\!\!$\{H_1=0.6,H_2=0.9\}$\!\!\!} & 0.87 & 0.89 & 0.89 & 0.90 & 0.90  \\ \cline{1-6}
\multicolumn{1}{|c|}{\!\!\!$\{H_1=0.9,H_2=0.6\}$\!\!\!} & 0.87 & 0.88 & 0.88 & 0.90 & 0.90   \\ \cline{1-6} \cline{1-6} \hline \hline
\multicolumn{1}{|c|}{\!\!\!$\{H_1=0.6,H_2=0.6\}$\!\!\!} & 0.59 & 0.62 & 0.64 & 0.63 & 0.63  \\ \cline{1-6}
\multicolumn{1}{|c|}{\!\!\!$\{H_1=0.9,H_2=0.9\}$\!\!\!} & 0.92 & 0.92 & 0.89 & 0.92  & 0.89  \\ \cline{1-6}
\end{tabular}
\end{table}
\subsection{Internet measurements}
\label{sec:Internet_measurements}
We perform measurements over multiple weeks using \mbox{\emph{H-probe}} from our lab in Germany targeting a number of worldwide PlanetLab nodes, in order to estimate the correlations on end-to-end paths across the Internet. The complex correlation structure along exemplary Internet paths is illustrated by the covariance plots in Fig.~\ref{fig:xcovsInternet}. First, we observe LRD covariance decay depicted in Fig. \ref{fig:xcov_upenn} and \ref{fig:xcov_virginia}. We point out that the correlation and hence the Hurst parameter vary significantly throughout the day. Moreover, we find that the correlation structure varies strongly across different paths. Additionally, for some targets we observed distinct periodicities on different timescales, as exemplified in Fig.~\ref{fig:xcov_umass}. Periodic behavior in offline Internet traces due to various protocol implementations has been previously reported, e.g., in \cite{broido:03}. Fig.~\ref{fig:upenn_H_estimates} depicts estimated $H$ using the aggregate variance method from Sect.~\ref{subsec:aggregate_variance}. The Hurst parameter estimates indicate a diurnal behavior. We provide additional data sets and results in the appendix \ref{sec:add_internet_results}.

\mbox{\emph{H-probe}} provides a new tool enabling researchers to shed light on the complex structure of traffic correlations without requiring the availability of traffic traces from Internet service providers.
\begin{figure}[htpb]
\includegraphics[type=pdf,ext=.pdf,read=.pdf,width=1.00\columnwidth]{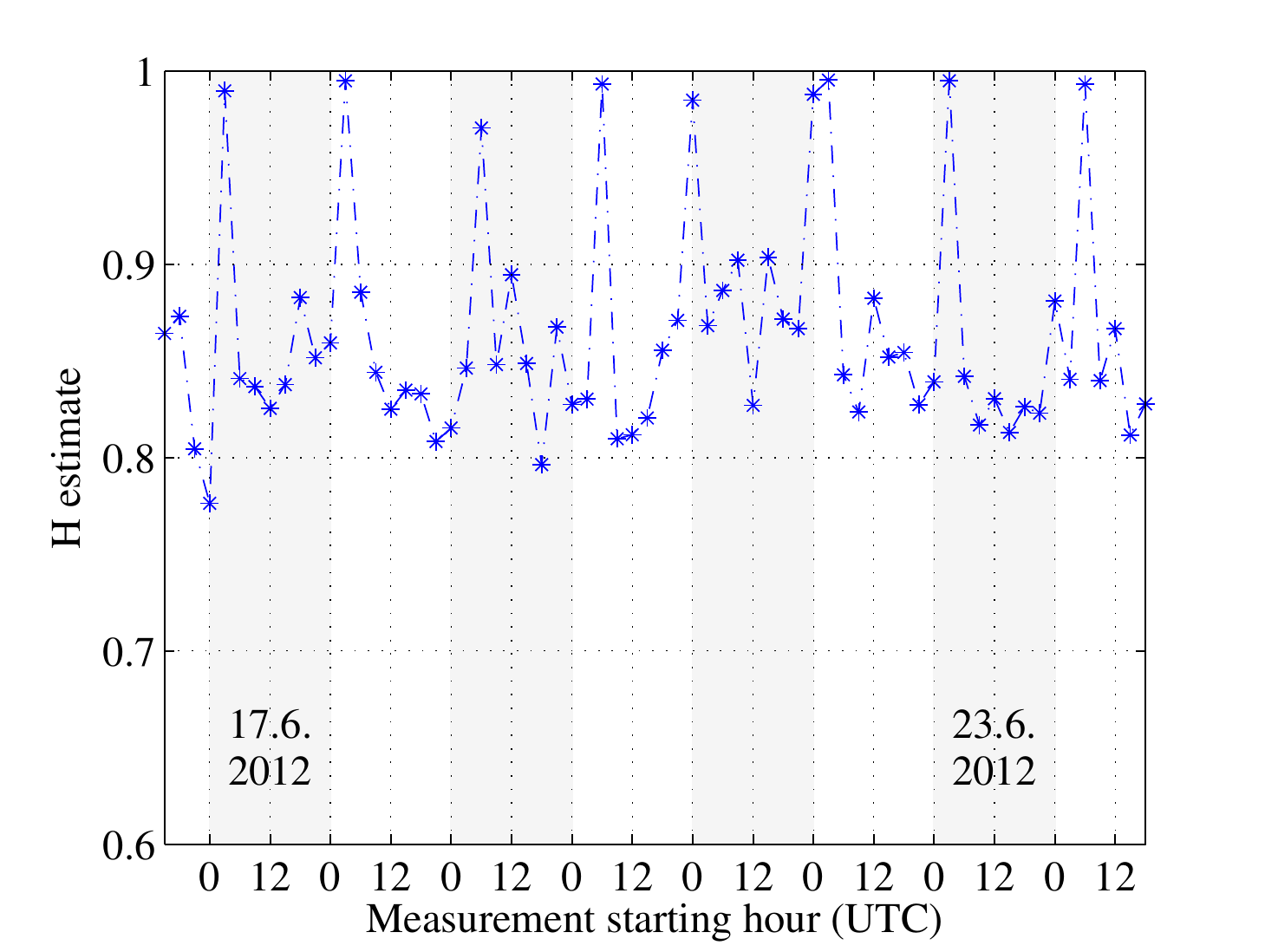}
   \caption{Hurst parameter estimates from continues measurements over one week. Target is planetlab1.cis.upenn.edu. $H$~ estimates obtained from the aggregate variance method.}
   \label{fig:upenn_H_estimates}
\end{figure}
\section{Conclusions}
\label{sec:conclusions}
In this paper, we derived estimators for the correlations of network traffic, given a limited set of traffic samples obtained by passive monitoring or active probing. We explored the impact of different sampling strategies on observed traffic correlations and quantified the impact of sampling on the observations. We showed that for finite sample sizes there are intrinsic limitations on the accuracy of the estimates and showed the influence of different sampling parameters. We found a non-linear tradeoff between sampling duration and sampling intensity. Further, we inferred the Hurst parameter $H$ from covariance estimates to quantify LRD. We developed and deployed an active probing method that estimates traffic correlations from end-to-end measurements without network support. The corresponding software is made publicly available. Finally, we presented measurement results from a controlled testbed environment as well as Internet paths. We observe a complex correlation structure on Internet paths. The correlation structure as well as $H$ significantly vary across time and paths. In addition to LRD we observe periodic behavior at different time scales.
\section{Appendix}
\balance
\subsection{Autocorrelation of sampling point processes:}
\label{subsec:Autocorrelation_sampling_processes}
We re\-phra\-se a basic result from~\cite{cox66} that is essential for the following derivations. Given a stationary stochastic process $A(\tau)$ in continuous time $\tau$ that takes values of either zero or one (Kronecker delta). The times between two Kronecker deltas are independent and identically distributed according to a density function $f(\tau)$. The autocorrelation density of $A(\tau)$ is known for lags $\tau > 0$ as~\cite{cox66} Eq. (4.6.1)
\begin{eqnarray}
\label{eq:autocorrelation_general}
\mathsf{E}\left[A(t)A(t+\tau)\right] = \mu_A \sum_{n=1}^{\infty}f^{(*n)}(\tau)
\end{eqnarray}
where $f^{(*n)}(\tau)$ is the $n$-fold self-convolution of $f(\tau)$. The intuition behind~\eqref{eq:autocorrelation_general} is that starting from one Kronecker delta at $A(t)$, another Kronecker delta at $A(t+\tau)$ can be the first, the second , the third \dots, delta to come after the one at $A(t)$. The derivation in \cite{cox66} uses a small time interval of length $\Delta t \rightarrow 0$, such that $\mu_A \Delta t$ is the probability that a Kronecker delta occurred in $[t,t+\Delta t)$ Eq. (4.5.9). To calculate the correlations of $A(t)$ \cite{cox66} deduces the conditional probability that a Kronecker delta occurred at $[t+\tau,t+\tau+\Delta t)$ given a Kronecker delta at $[t,t+\Delta t)$. This is given by $\sum_{n=1}^{\infty}f^{(*n)}(\tau) \Delta t$ Eq. (4.5.11). It follows that $(\Delta t)^2 \mu_A \sum_{n=1}^{\infty}f^{(*n)}(\tau)$ is the correlation of Kronecker deltas observed in time slots of length $\Delta t$.

A discrete-time extension of the correlation calculation from~\cite{cox66} with probability mass functions is straightforward. To this end we replace the probability density functions by probability mass functions and consider a time slot $\Delta t = 1$ such that we obtain correlation functions instead of densities.

For the continuous time distributions considered in this work we regard the correlations on fixed time slot basis. We use a discretization with a time slot of unit size.

\subsection{Geometric sampling:}
\label{subsec:Geometric_sampling}
Given a discrete time sampling process $A(\tau)$ with inter-sample times drawn from a geometric distribution given in Tab.~\ref{tab:sampling_distributions} with parameter $p$. The $n$-fold self-convolution of $f(\tau)$ is the probability mass function (pmf) of the sum of $n$ geometrically distributed random variables, i.e., negative binomial distributed with parameters $p$ and $n$ \cite{stirzaker01}. We insert the pmf from \cite{stirzaker01} for $f^{(*n)}(\tau)$ into \eqref{eq:autocorrelation_general} to find
\begin{eqnarray*}
\mathsf{E}\left[A(t)A(t+\tau)\right]\!\!\!& = &\!\!\!\mu_A \sum_{n=1}^{\infty}  \binom {\tau-1} {n-1} p^n (1-p)^{\tau-n}\\
& = &\!\!\!\mu_A \sum_{n=1}^{\tau}  \binom {\tau-1} {n-1} p^n (1-p)^{\tau-n}\\
& = &\!\!\!\mu_A \sum_{s=0}^{l} \binom {l} {s} p^{s} (1-p)^{l-s}p \\
& = &\!\!\!\mu_A p\\
& = &\!\!\!\mu_A^2.
\end{eqnarray*}
In the second line we used the support of the pmf $f^{(*n)}(\tau)$ to bound $1\le n\le\tau$. In the third line we substituted $\tau\!\!~-~\!\!1=l$ and $n-1=s$. In the fourth line we used the binomial identity $\sum_{s=0}^{l} \binom {l} {s} x^{s} y^{l-s}\!\!~=~\!\!(x+y)^{l}$. Finally, we know from the geometric distribution that $\mu_A=p$.

\subsection{Gamma sampling:}
\label{subsec:Gamma_sampling}
Given the time between two Kronecker deltas is Gamma distributed as in Tab.~\ref{tab:sampling_distributions} with parameters $\alpha$,~$\beta$. The sum in~\eqref{eq:autocorrelation_general} leads to the following expression
\begin{eqnarray}
\label{eq:autocorr_sum_gamma}
\sum_{n=1}^{\infty}f^{(*n)}(\tau) = \frac{e^{-\beta \tau}}{\tau} \sum_{n=1}^{\infty}\frac{(\beta \tau)^{n \alpha}}{\Gamma(n \alpha)} .
\end{eqnarray}

We derive analytical expressions for the cases $\alpha = 2$ and $\alpha = 4$ corresponding to Erlang(2) and Erlang(4) distributions of the inter-sample times. The mean rate of the sampling process is $\mu_A = \beta/\alpha$. We substitute $\alpha = 2$ into \eqref{eq:autocorr_sum_gamma} and evaluate the sum in \eqref{eq:autocorr_sum_gamma} as
\begin{eqnarray*}
\sum_{n=1}^{\infty}\frac{(\beta \tau)^{2n}}{(2n-1)!}=\beta \tau \sinh(\beta \tau)
\end{eqnarray*}
using the series expansion for $\sinh$ from \cite{abramo65} Eq.~(4.5.62). We then exploit the identity $\sinh(x)=(e^{x}-e^{-x})/2$ and that $\mu_A = \beta/2$ to evaluate \eqref{eq:autocorrelation_general} as
\begin{eqnarray*}
\mathsf{E}\left[A(t)A(t+\tau)\right] = \mu_A^2 (1-e^{-2\beta \tau})
\end{eqnarray*}

For $\alpha = 4$ we evaluate the sum in \eqref{eq:autocorr_sum_gamma} as
\begin{eqnarray*}
\sum_{n=1}^{\infty}\frac{(\beta \tau)^{4n}}{(4n-1)!}=\beta \tau \left(\frac{\sinh(\beta \tau) - \sin(\beta \tau)}{2}\right)
\end{eqnarray*}
using the series expansion for $\sinh$ and $\sin$ from \cite{abramo65} Eq.~(4.5.62) resp. Eq.~(4.3.65). We insert this result into \eqref{eq:autocorrelation_general} to find
\begin{eqnarray*}
\mathsf{E}\left[A(t)A(t+\tau)\right] = \mu_A^2 (1-e^{-2\beta \tau}-2\sin(\beta \tau) e^{-\beta \tau}).
\end{eqnarray*}
using the identity $\sinh(x)=(e^{x}-e^{-x})/2$ and that $\mu_A = \beta/4$.
\subsection{Uniform sampling:}
\label{subsec:Uniform_sampling}
Given the time between two Kronecker deltas is uniformly distributed with $f(\tau) = 1/b$ for $\tau \in [0,b]$ and zero otherwise. The mean rate of the sampling process is $\mu_A = 2/b$. In the following, we calculate the sum from~\eqref{eq:autocorrelation_general} for $\tau \in [0,b]$. We expand $\sum_{n=1}^{\infty}f^{(*n)}(\tau)
$ in~\eqref{eq:autocorrelation_general} as
\begin{eqnarray}
\label{eq:sum_n_convolution_uniform}
& &\!\!\!\!\!\!\!\!\!\sum_{n=1}^{\infty}f^{(*n)}(\tau) = f(\tau) + \int_{0}^{\tau} f(x_1) f(\tau-x_1) dx_1\nonumber \\
&&\!\!\!\!\!\!\!\!\! +\int_{0}^{\tau}\!\!\int_{0}^{x_1}\!\!f(\tau\!-\!x_1) f(x_1\!-\!x_2) f(x_2) dx_1 dx_2\nonumber \\
&&\!\!\!\!\!\!\!\!\! + \int_{0}^{\tau}\!\!\int_{0}^{x_1}\!\!\!\!\int_{0}^{x_2}\!\!f(\tau\!-\!x_1) f(x_1\!-\!x_2) f(x_2\!-\!x_3) f(x_3) dx_1 dx_2 dx_3\nonumber\\
&&\!\!\!\!\!\!\!\!\! + \cdots
\end{eqnarray}
Since all arguments $(\cdot)$ of the pdf $f$ in \eqref{eq:sum_n_convolution_uniform} are in the range $[0,\tau]$ we can replace all pdfs $f(\cdot)$ in \eqref{eq:sum_n_convolution_uniform} by $1/b$. Equation \eqref{eq:sum_n_convolution_uniform} evaluates then to the series expansion of the exponential function, i.e.,
\begin{eqnarray}
\label{eq:uniform_exponential_series}
\sum_{n=1}^{\infty}f^{(*n)}(\tau) & = &\frac{1}{b} + \frac{\tau}{b^2} + \frac{\tau^2}{2!b^3} + \frac{\tau^3}{3!b^4} + \cdots \nonumber\\
& = &\frac{1}{b} \sum_{n=0}^{\infty} \frac{(\tau/b)^n}{n!}\nonumber\\
& = &\frac{1}{b} e^{\tau/b}
\end{eqnarray}
Finally, we use \eqref{eq:uniform_exponential_series}, \eqref{eq:autocorrelation_general}, and that $\mu_A = 2/b$ to derive

\begin{eqnarray*}
\mathsf{E}\left[A(t)A(t+\tau)\right] = \frac{1}{2} \mu_A^2 e^{\tau/b}
\end{eqnarray*}
for $\tau \in [0,b]$. Since the process is mixing~\cite{baccelli:pasta}, we conclude for $\tau > b$ that the autocorrelation $\mathsf{E}\left[A(t)A(t+\tau)\right]$ converges quickly to $ \mu_A^2$.

\subsection{Distribution of the autocovariance of geometrically sampled iid Gaussian sequences:}
\label{subsec:distr_iid_gaussian_sampled_CI}
Given a sample path $w(t)$ of $W(t)$, that is described by \eqref{eq:process_sampling}, where $A(t)$ is a Bernoulli process and $Y(t)$ is a Gaussian iid process with mean $\mu_Y$ and variance $\sigma_Y^2$. The mean of the observations is known as $\mu_W = \mu_A \mu_Y$. The variance of $W(t)$ is given by $\sigma_W^2 = \sigma_A^2\mu_Y^2 +\sigma_Y^2\mu_A^2 +\sigma_A^2 \sigma_Y^2$ through independence of $A(t)$ and $Y(t)$. From the Bernoulli process $A(t)$ we know $\sigma_A^2 + \mu_A^2= \mu_A $ such that we can write $\sigma_W^2 = \sigma_A^2\mu_Y^2 +\sigma_Y^2\mu_A $. We consider a limited sample size $T$ such that $w(t)$ given for $t \in [1,T]$. An unbiased estimator of the autocovariance $\tilde{c}_W(\tau)$ is
\begin{eqnarray*}
\tilde{c}_W(\tau) = \frac{1}{T-\tau}\sum_{t=1}^{T-\tau}\left(w(t) - {\mu}_W\right)\left(w(t+\tau) - {\mu}_W\right) .
\end{eqnarray*}
After expansion of the product, for large $T-\tau$ we apply the central limit theorem to approximate the individual terms by normal random variables to find
\begin{eqnarray*}
\tilde{c}_W(\tau)\!\!\!\! & \approx &\!\!\!\!\mathcal{N}\left(\mu_W^2,\tfrac{\sigma_W^4+2\mu_W^2\sigma_W^2}{T-\tau}\right) - \mathcal{N}\left(2\mu_W^2,\tfrac{2\mu_W^2\sigma_W^2}{T-\tau}\right) + \mu_W^2\\
& = &\!\!\!\!\mathcal{N}\left(0,\tfrac{\sigma_W^4+ 4\mu_W^2\sigma_W^2}{T-\tau}\right).
\end{eqnarray*}
The confidence interval is directly obtained as $2$ times the standard deviation, i.e., $ \pm 2 \sqrt{\sigma_W^4+ 4\mu_W^2\sigma_W^2}/\sqrt{(T-\tau)}$. Finally, we insert $\sigma_W^2 = \sigma_A^2\mu_Y^2 +\sigma_Y^2\mu_A $ and assume $T \gg \tau$ to find the confidence interval $\pm 2 \sqrt{(\sigma_A^2\mu_Y^2 +\sigma_Y^2\mu_A )^2+ 4\mu_A^2\mu_Y^2(\sigma_A^2\mu_Y^2 +\sigma_Y^2\mu_A )}/\sqrt{T}$.
\subsection{Distribution of the autocovariance of the geometric sampling process:}
\label{subsec:distr_poisson_sampling_CI}
Given a sample path $a(t)$ of a  Bernoulli sampling process $A(t)$ with a limited sample size $T$ such that $A(t)$ is given for $t \in [1,T]$. The Bernoulli sampling process has geometrically distributed inter-sample times as in Tab.~\ref{tab:sampling_distributions}. Given the mean $\mu_A$ is known. An unbiased estimator of the autocovariance $\tilde{c}_A(\tau)$ is
\begin{eqnarray*}
\tilde{c}_A(\tau) = \frac{1}{T-\tau}\sum_{t=1}^{T-\tau}\left(a(t) - {\mu}_A\right)\left(a(t+\tau) - {\mu}_A\right) .
\end{eqnarray*}
After expansion of the product, for large $T-\tau$ we apply the central limit theorem to approximate the individual terms by normal random variables to find
\begin{eqnarray*}
\tilde{c}_A(\tau) \!\!\!\! & \approx &\!\!\!\! \mathcal{N}\left(\mu_A^2,\tfrac{\sigma_A^4+2\mu_A^2\sigma_A^2}{T-\tau}\right) - \mathcal{N}\left(2\mu_A^2,\tfrac{2\mu_A^2\sigma_A^2}{T-\tau}\right) + \mu_A^2\\
\!\!\!\! & = &\!\!\!\!\mathcal{N}\left(0,\tfrac{\sigma_A^4+ 4\mu_A^2\sigma_A^2}{T-\tau}\right).
\end{eqnarray*}
The confidence interval is directly obtained as $2$ times the standard deviation, i.e., $\pm 2 \sigma_A\sqrt{\sigma_A^2+4\mu_A^2}/\sqrt{T - \tau}$.
\subsection{Bias of the autocovariance estimator for $Y(t)$:}
\label{subsec:bias_autocov_y}
We derive the bias of the covariance estimator~\eqref{eq:cov_estimator} if applied to a sample path $y(t)$ of the LRD process $Y(t)$ and show that it is asymptotically unbiased as the sample duration tends to infinity $T \rightarrow \infty$. Given a sample path $y(t)$ with sample mean $\tilde{\mu}_{Y_0} = \tfrac{1}{(T-\tau)} \sum_{t=1}^{T-\tau} y(t)$ and $\tilde{\mu}_{Y_\tau} = \tfrac{1}{(T-\tau)} \sum_{t=1}^{T-\tau} y(t+\tau)$ a covariance estimator is
\begin{eqnarray}
\label{eq:cov_estimator}
\tilde{c}_Y(\tau) = \frac{1}{T-\tau}\sum_{t=1}^{T-\tau}\left(y(t) - \tilde{\mu}_{Y_0} \right) \left(y(t+\tau) - \tilde{\mu}_{Y_\tau} \right)
\end{eqnarray}
To estimate the bias, we derive the expected value $\mathsf{E}[\tilde{c}_Y(\tau)]$. To this end, we expand the product and compute the expected values of the individual terms $y(t)y(t+\tau)\!-\!y(t)\tilde{\mu}_{Y_\tau}-y(t\!~+~\!\tau)\tilde{\mu}_{Y_0} + \tilde{\mu}_{Y_0}\tilde{\mu}_{Y_\tau}$ as
\begin{eqnarray*}
\mathsf{E} \left[ \frac{1}{T-\tau} \sum_{t=1}^{T-\tau} y(t)y(t+\tau) \right] = c_Y(\tau) + \mu_Y^2 ,
\end{eqnarray*}
where $c_Y(\tau)$ and $\mu_Y$ are the population parameters, and
\begin{eqnarray*}
\mathsf{E} \left[ \frac{1}{T-\tau} \sum_{t=1}^{T-\tau} y(t)\tilde{\mu}_{Y_\tau} \right] = \mathsf{E}[\tilde{\mu}_{Y_0}\tilde{\mu}_{Y_\tau}] ,
\end{eqnarray*}
where we used that $\tfrac{1}{T-\tau} \sum_{t=1}^{T-\tau} y(t) = \tilde{\mu}_{Y_0}$. The same argument applies for the product $y(t+\tau)\tilde{\mu}_{Y_0}$. We estimate
\begin{eqnarray*}
\mathsf{E}\left[\tilde{\mu}_{Y_0}\tilde{\mu}_{Y_\tau}\right] & = & \text{Cov}\left[\tilde{\mu}_{Y_0}, \tilde{\mu}_{Y_\tau}\right] + \mathsf{E}\left[\tilde{\mu}_{Y_0}\right]\mathsf{E}\left[\tilde{\mu}_{Y_\tau}\right] \\ & \le & \text{Var}\left[ \tilde{\mu}_{Y_0} \right] + \mu_Y^2,
\end{eqnarray*}
with $\mathsf{E}\left[\tilde{\mu}_{Y_0}\right] = \mu_Y$, respectively, $\mathsf{E}\left[\tilde{\mu}_{Y_\tau}\right] = \mu_Y$ that are unbiased estimators of the population mean. Note that the samples that form $\tilde{\mu}_{Y_0}$ and $\tilde{\mu}_{Y_\tau}$ overlap by $T-2\tau$, such that for $\tau \ll T$ we have $\text{Cov}\left[\tilde{\mu}_{Y_0}, \tilde{\mu}_{Y_\tau}\right] \approx \text{Var}\left[ \tilde{\mu}_{Y_0} \right]$. Finally, we use that the variance of the mean of $T-\tau$ samples of an LRD process decays as $\sigma_Y^2/(T-\tau)^{2-2H}$ to derive
\begin{eqnarray*}
\mathsf{E}\left[\tilde{\mu}_{Y_0}\tilde{\mu}_{Y_\tau}\right] \approx \frac{\sigma_Y^2}{(T-\tau)^{2-2H}} + \mu_Y^2 .
\end{eqnarray*}
Putting all pieces together we obtain
\begin{eqnarray}
\label{eq:biascy} \mathsf{E}\left[\tilde{c}_Y(\tau)\right] \!\!\!& = &\!\!\! c_Y(\tau) - \text{Cov}[\tilde{\mu}_{Y_0},\tilde{\mu}_{Y_\tau}] \\
\!\!\!& \approx &\!\!\! c_Y(\tau) - \frac{\sigma_Y^2}{(T-\tau)^{2 -2H}},\nonumber
\end{eqnarray}
i.e., the estimator underestimates the covariance, where the bias diminishes if $T-\tau$ is large. We note, that the bias cannot be easily eliminated if the prefactor $1/(T-\tau-1)$ is used instead of $1/(T-\tau)$ in~\eqref{eq:cov_estimator}, as it is typically done if the covariance is estimated using the sample mean.

\subsection{Bias of the autocovariance estimator for $W(t)$:}
\label{subsec:bias_autocov_w}
%
We derive the bias of the covariance estimator~\eqref{eq:cov_estimator} if applied to the observed process $W(t)$. We show that the estimator is asymptotically unbiased for large sample durations $T\rightarrow \infty$. Given a sample path $w(t)$ with sample mean $\tilde{\mu}_{W_0}$, respectively, $\tilde{\mu}_{W_\tau}$ defined in Sect.~\ref{subsec:bias_autocov_y}. We use the covariance estimator from~\eqref{eq:cov_estimator}. To estimate the bias we derive $\mathsf{E}[\tilde{c}_W(\tau)]$ from~\eqref{eq:biascy} as
\begin{eqnarray*}
\mathsf{E}\left[\tilde{c}_W(\tau)\right] = c_W(\tau) - \text{Cov}[\tilde{\mu}_{W_0},\tilde{\mu}_{W_\tau}],
\end{eqnarray*}
where $c_W(\tau)$ is the population parameter. As before we estimate $\text{Cov}[\tilde{\mu}_{W_0},\tilde{\mu}_{W_\tau}] \approx \text{Var}[\tilde{\mu}_{W_0}]$ and express $\tilde{\mu}_{W_0}$ as a sum to compute the variance $\text{Var}\left[\tilde{\mu}_{W_0}\right]$ as
\begin{eqnarray*}
\text{Var}\left[\tilde{\mu}_{W_0}\right] = \frac{1}{(T-\tau)^2} \sum_{i=1}^{T-\tau} \sum_{j=1}^{T-\tau} \text{Cov}[w(i),w(j)],
\end{eqnarray*}
where we use the identity
\begin{eqnarray}
\label{eq:identity_variance_of_sum}
\text{Var}\left[\sum_{i=1}^{n}X_i\right]=\sum_{i = 1}^{n}\sum_{j = 1}^{n}\text{Cov}\left[X_i,X_j\right]
\end{eqnarray}
for random variables $X_i$, $i \in [1,n]$. Rearranging the statement above yields
\begin{eqnarray*}
\text{Var}\left[\tilde{\mu}_{W_0}\right] = \frac{c_W(0)}{T-\tau} + \frac{2}{(T-\tau)^2} \sum_{t=1}^{T-\tau-1} (T-\tau-t) c_W(t),
\end{eqnarray*}
where we used the notation $\text{Cov}[w(t),w(t+\tau)] = c_W(\tau)$. The expected value of the sample covariance follows as
\begin{eqnarray}
\mathsf{E}[\tilde{c}_W(\tau)] \!\!\!& \approx & c_W(\tau) -\frac{c_W(0)}{T-\tau} \nonumber\\ \label{eq:biascw}
 & - & \frac{2}{(T-\tau)^2} \sum_{t=1}^{T-\tau-1} (T-\tau-t) c_W(t).
\end{eqnarray}

\subsection{Aggregate variance of a sampled process:}
\label{subsec:aggvar_of_sampled_process}
\begin{proof}[Proof of Lem.~\ref{lem:agg_var}]
The aggregated version of the process $W(t)$ on the aggregation level $M$ is defined for $k \in \mathbb{N}$ as
\begin{eqnarray*}
W^{(M)}(k) = \frac{1}{M} \sum_{t = 1+(k-1)M}^{kM} W(t) ,
\end{eqnarray*}
where $M$ is the block size that is used for averaging. The variance of $W^{(M)}$ is obtained using the identity \eqref{eq:identity_variance_of_sum} as
\begin{eqnarray*}
\text{Var}\left(W^{(M)}\right) = \frac{1}{M^2} \sum_{i=1}^M\sum_{j=1}^M \text{Cov}(W(i),W(j)).
\end{eqnarray*}
Using the notation $c_W(\tau) = \text{Cov}(W(i),W(i+\tau))$ we rearrange the previous statement as
\begin{eqnarray}
\label{eq:agg_var_var}
\text{Var}\left(W^{(M)}\right) = \frac{c_W(0)}{M} + \frac{2}{M^2} \sum_{\tau=1}^{M-1} (M-\tau) c_W(\tau).
\end{eqnarray}
The same expression can be formulated for $\text{Var}(A^{(M)})$ and $\text{Var}(Y^{(M)})$ by substituting $c_A(\tau)$ resp. $c_Y(\tau)$ for $c_W(\tau)$ in \eqref{eq:agg_var_var}. Next, we insert $c_W(\tau)$ from Lem.~\ref{lem:sample_covariance} into~\eqref{eq:agg_var_var} to relate $\text{Var}(W^{(M)})$ to $\text{Var}(A^{(M)})$, $\text{Var}(Y^{(M)})$, $c_A(\tau)$ and $c_Y(\tau)$. We obtain
\begin{multline*}
\text{Var}\left(W^{(M)}\right) = \frac{1}{M}c_Y(0)(c_A(0) + \mu_A^2) + \frac{1}{M}c_A(0) \mu_Y^2 \\ + \frac{2}{M^2} \sum_{\tau=1}^{M-1} (M-\tau)( c_Y(\tau)(c_A(\tau) + \mu_A^2) + c_A(\tau) \mu_Y^2).
\end{multline*}
After some reordering we arrive at
\begin{multline*}
\text{Var}\left(W^{(M)}\right) = \mu_Y^2 \frac{c_A(0)}{M} + \mu_Y^2 \frac{2}{M^2} \sum_{\tau=1}^{M-1} (M-\tau)  c_A(\tau) \\ + \mu_A^2 \frac{c_Y(0)}{M} + \mu_A^2 \frac{2}{M^2} \sum_{\tau=1}^{M-1} (M-\tau) c_Y(\tau) \\ + \frac{1}{M}c_Y(0)c_A(0) + \frac{2}{M^2} \sum_{\tau=1}^{M-1} (M-\tau) c_Y(\tau)c_A(\tau)
\end{multline*}
and by application of~\eqref{eq:agg_var_var} we obtain
\begin{multline*}
\text{Var}\left(W^{(M)}\right) = \mu_Y^2 \text{Var}\left(A^{(M)}\right) + \mu_A^2 \text{Var}\left(Y^{(M)}\right) \\ + \frac{1}{M}c_Y(0)c_A(0) + \frac{2}{M^2} \sum_{\tau=1}^{M-1} (M-\tau) c_Y(\tau)c_A(\tau) .
\end{multline*}
\end{proof}
\section{Data sets from Internet measurements}
\label{sec:add_internet_results}
We perform measurements over multiple weeks using \mbox{\emph{H-probe}} from our lab in Germany targeting a number of worldwide PlanetLab nodes. Next, we describe the measurement setup:
\begin{itemize}
  \item Discretization slot length $\delta = 1$ ms.
  \item Geometrically distributed inter-sample times with $p\!\!~=~\!\!0.1$.
  \item Number of probes collected $10^6$ ($\sim 3$ hours)
  \item ICMP probing packets of size $64$ Byte
  \item Probing rate $100$ pkt/s $\sim 70$ kbps ($24$ Byte layer $2$ overhead)
\end{itemize}

In the following we present a representative set of the measurement results, where the target is planetlab1.cis.upenn.edu. We show exemplary end-to-end covariance as well aggregate variance estimates at two different days. In addition, we show estimates of measurements starting at 10:45 UTC from 17-24.7.2012.

For the autocovariance as well as the aggregate variance method the slope of the curve is given by $2H-2$. Slope estimates are obtained through least square regression. The $H$ estimates from Internet measurements have a moderately higher variance compared to active probing results from Fig.~\ref{fig:boxplot_1hop}. Further, the $H$ estimates in Fig.~\ref{fig:upenn_H_estimates} show diurnal behavior.

\clearpage

\hspace{-20pt}
\begin{minipage}{\linewidth}
\includegraphics[type=pdf,ext=.pdf,read=.pdf,width=1.00\columnwidth]{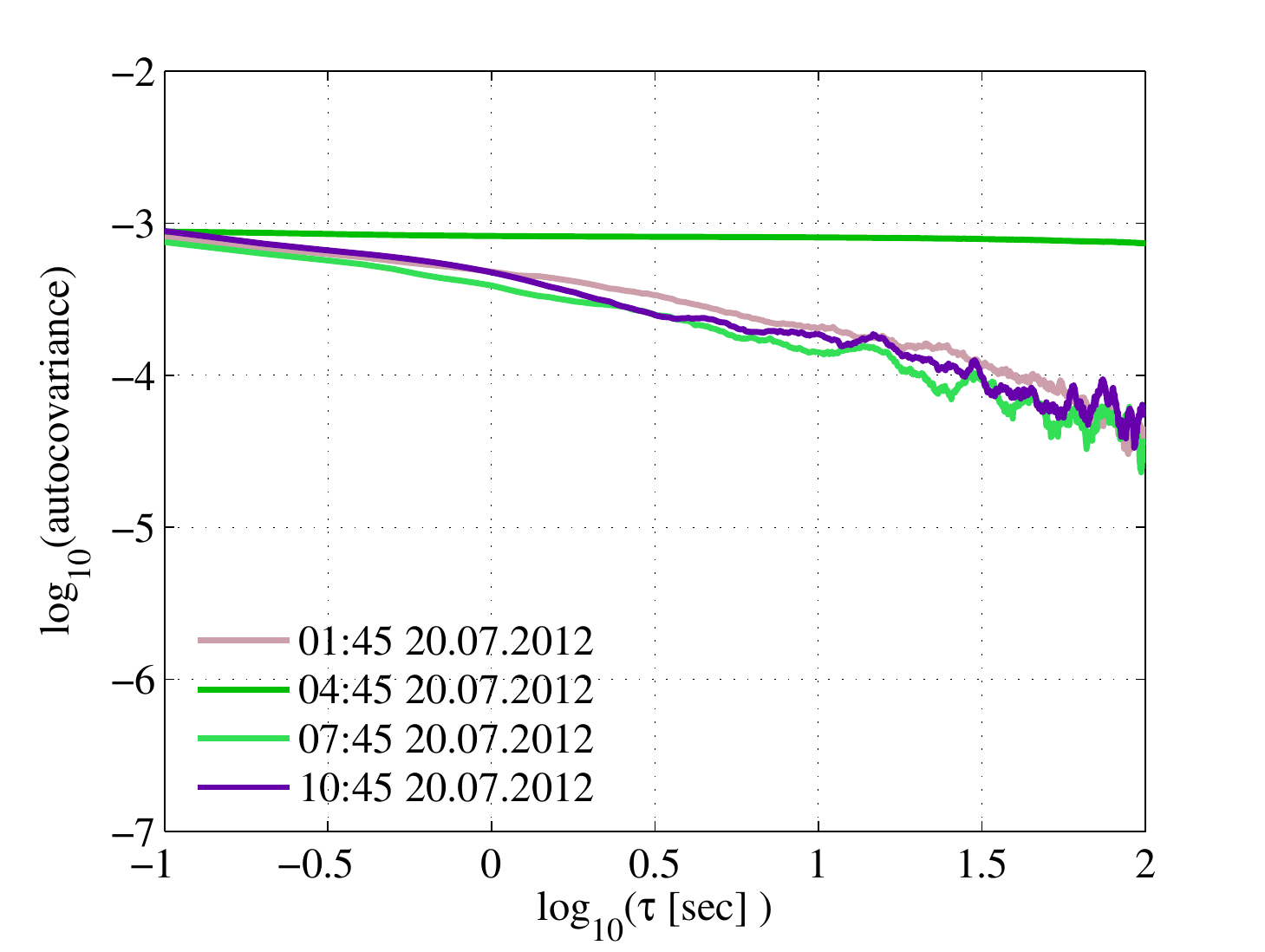}
\captionof{figure}{Covariance estimates 20.7.2012 UTC}
\end{minipage}
\hspace{-20pt}
\begin{minipage}{\linewidth}
\includegraphics[type=pdf,ext=.pdf,read=.pdf,width=1.00\columnwidth]{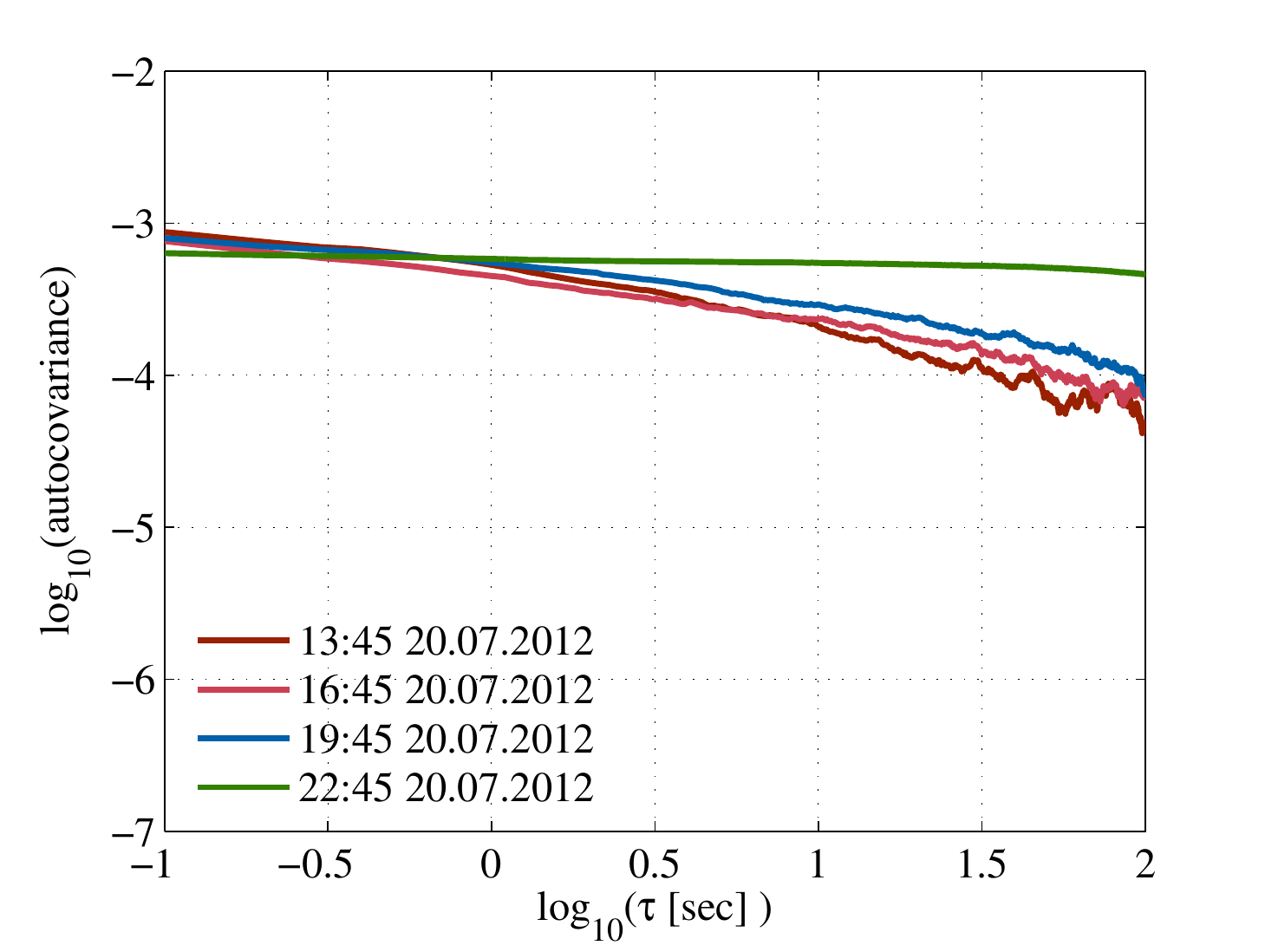}
\captionof{figure}{Covariance estimates 20.7.2012 UTC}
\end{minipage}
\begin{minipage}{\linewidth}
\includegraphics[type=pdf,ext=.pdf,read=.pdf,width=1.00\columnwidth]{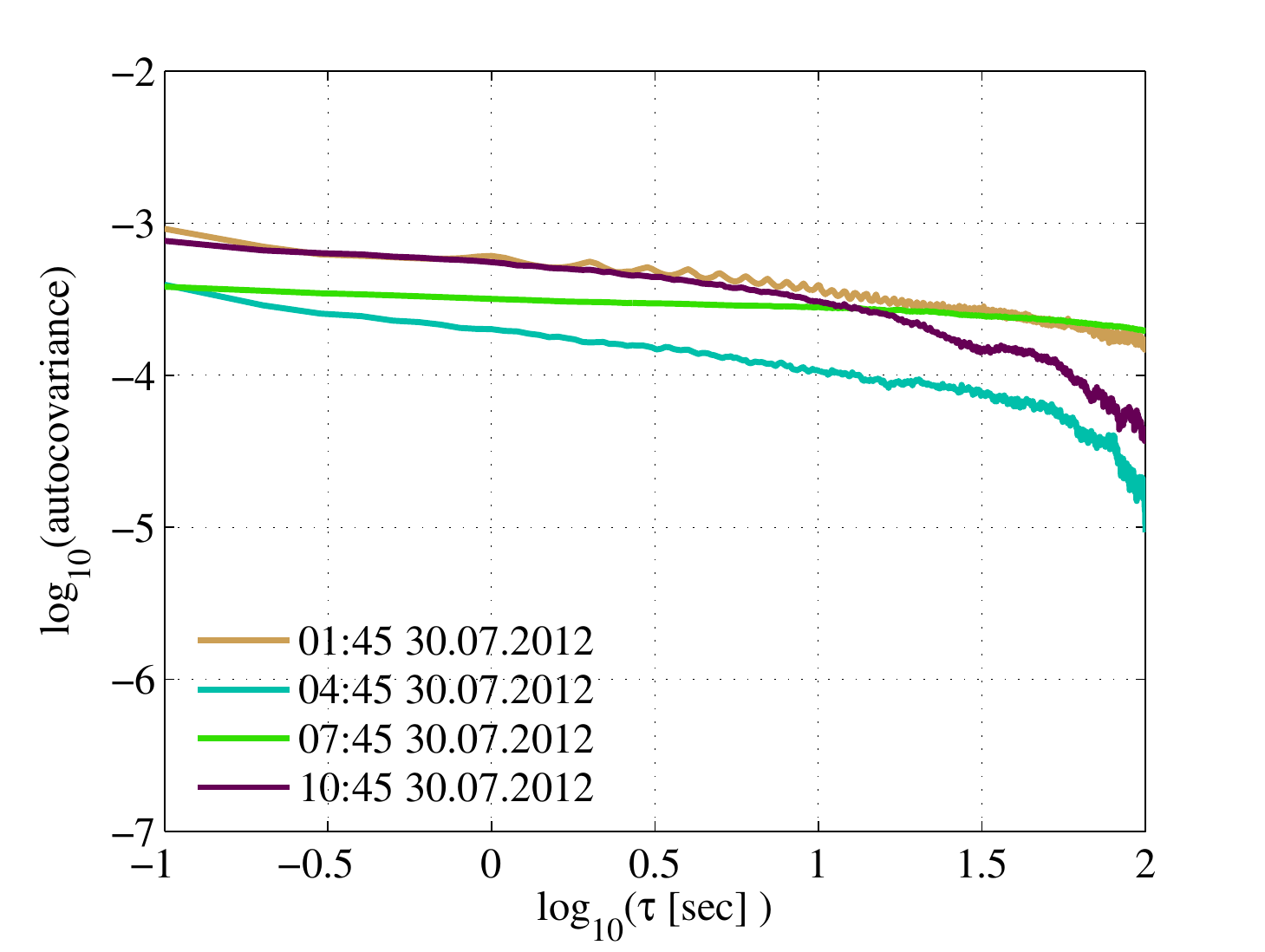}
\captionof{figure}{Covariance estimates 30.7.2012 UTC}
\end{minipage}
\begin{minipage}{\linewidth}
\includegraphics[type=pdf,ext=.pdf,read=.pdf,width=1.00\columnwidth]{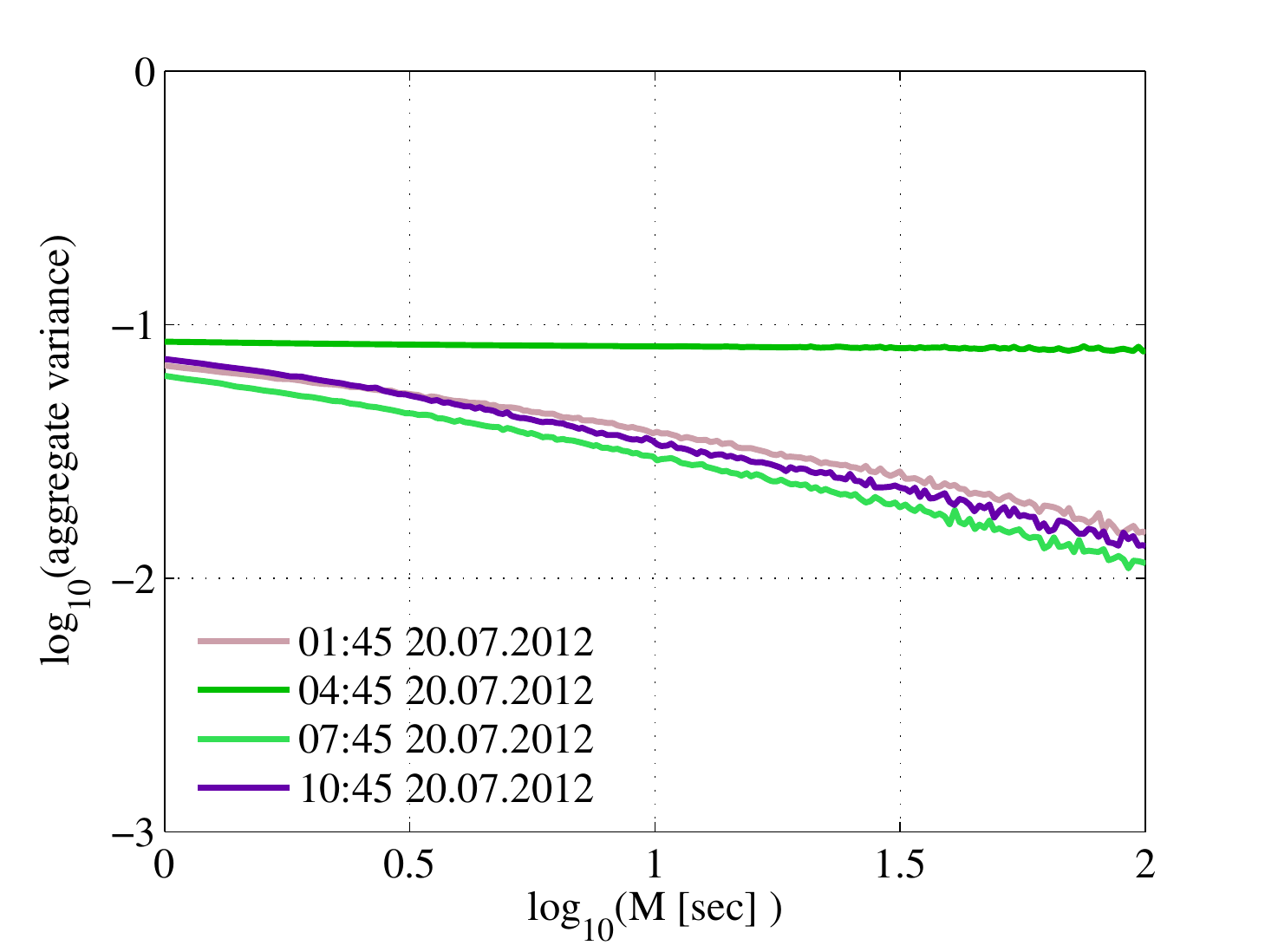}
\captionof{figure}{Aggregate variance estimates 20.7.2012 UTC}
\end{minipage}
\begin{minipage}{\linewidth}
\includegraphics[type=pdf,ext=.pdf,read=.pdf,width=1.00\columnwidth]{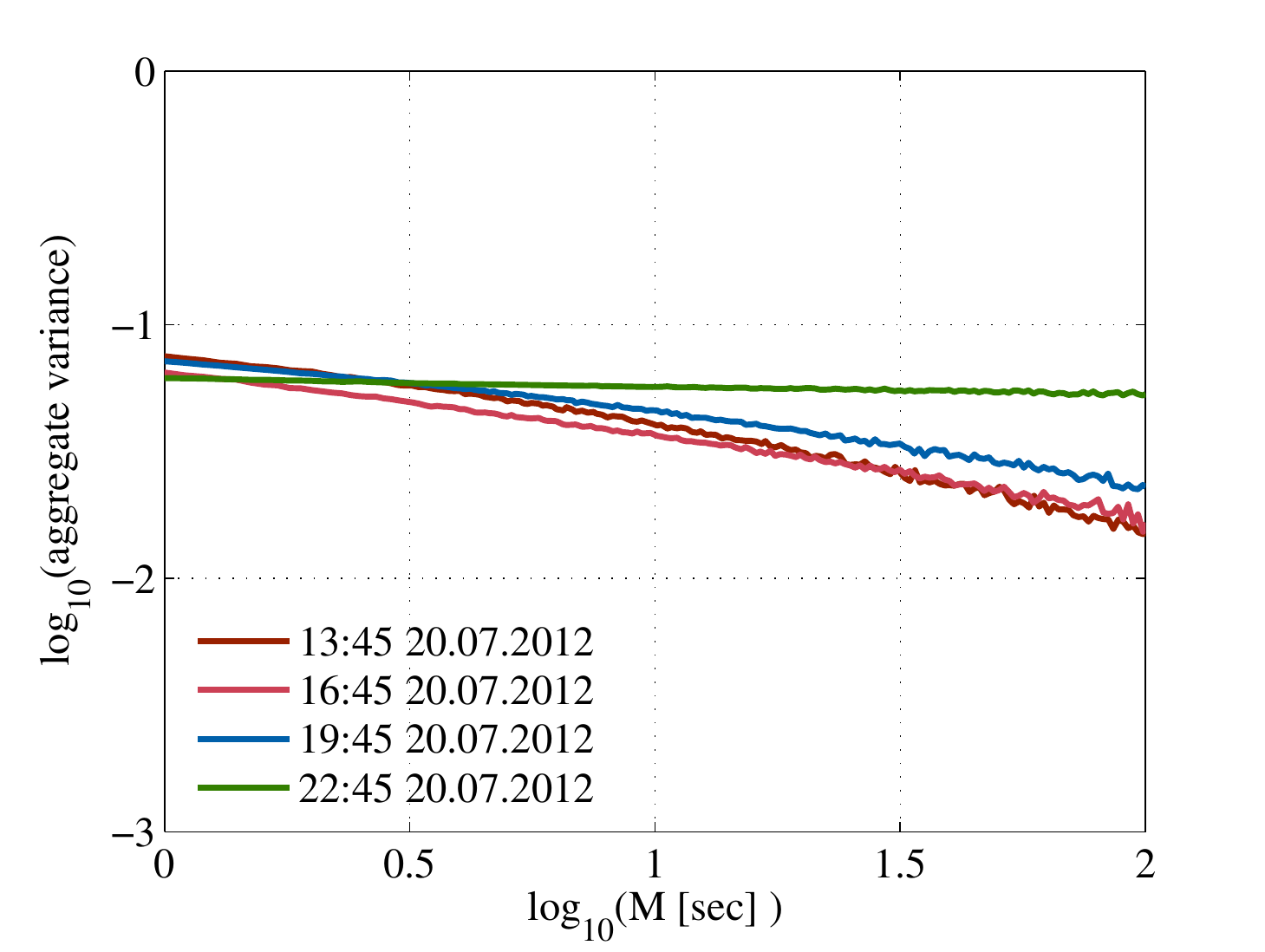}
\captionof{figure}{Aggregate variance estimates  20.7.2012 UTC}
\end{minipage}
\begin{minipage}{\linewidth}
\includegraphics[type=pdf,ext=.pdf,read=.pdf,width=1.00\columnwidth]{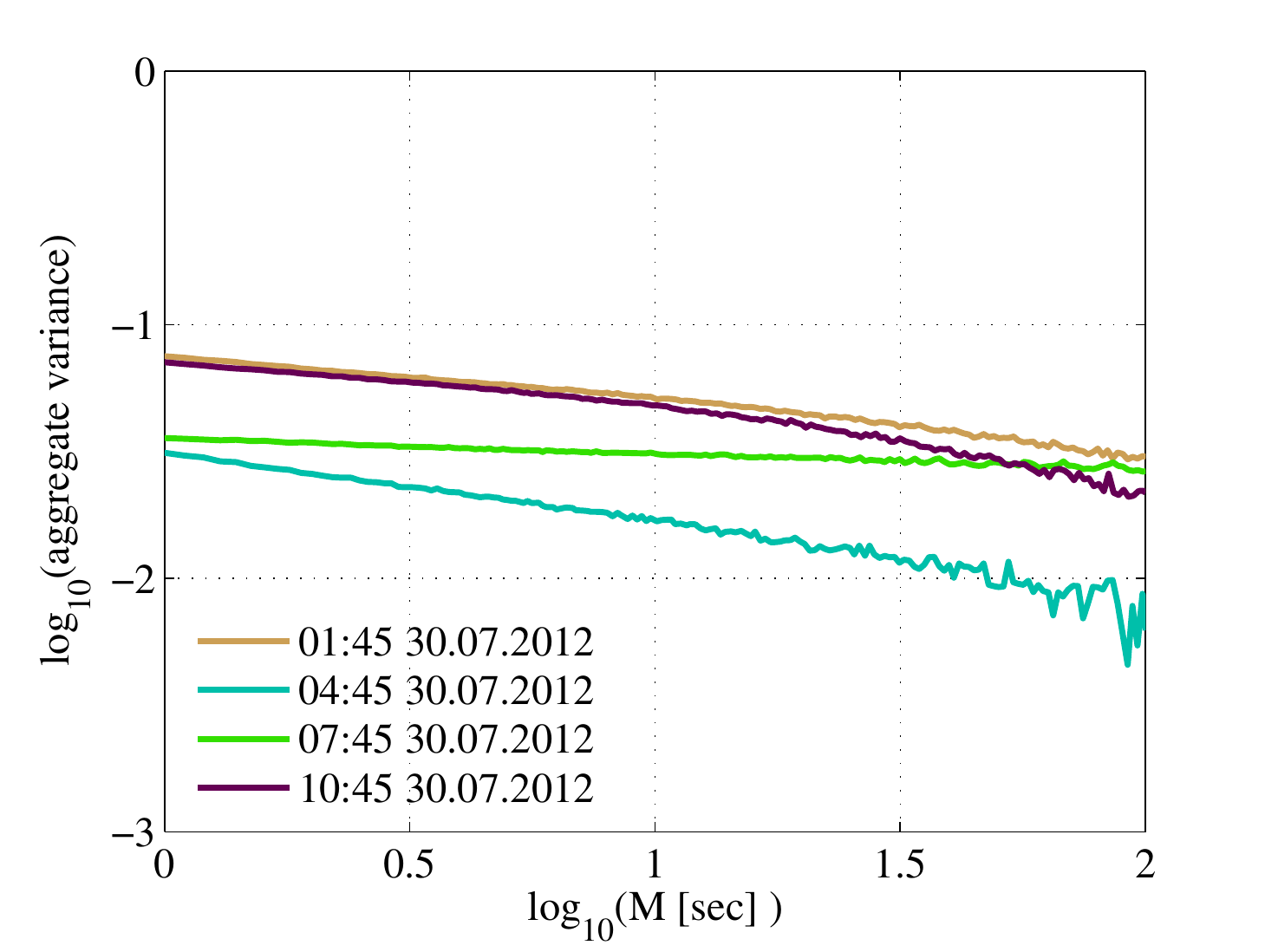}
\captionof{figure}{Aggregate variance estimates  30.7.2012 UTC}
\end{minipage}
\begin{minipage}{\linewidth}
\includegraphics[type=pdf,ext=.pdf,read=.pdf,width=1.00\columnwidth]{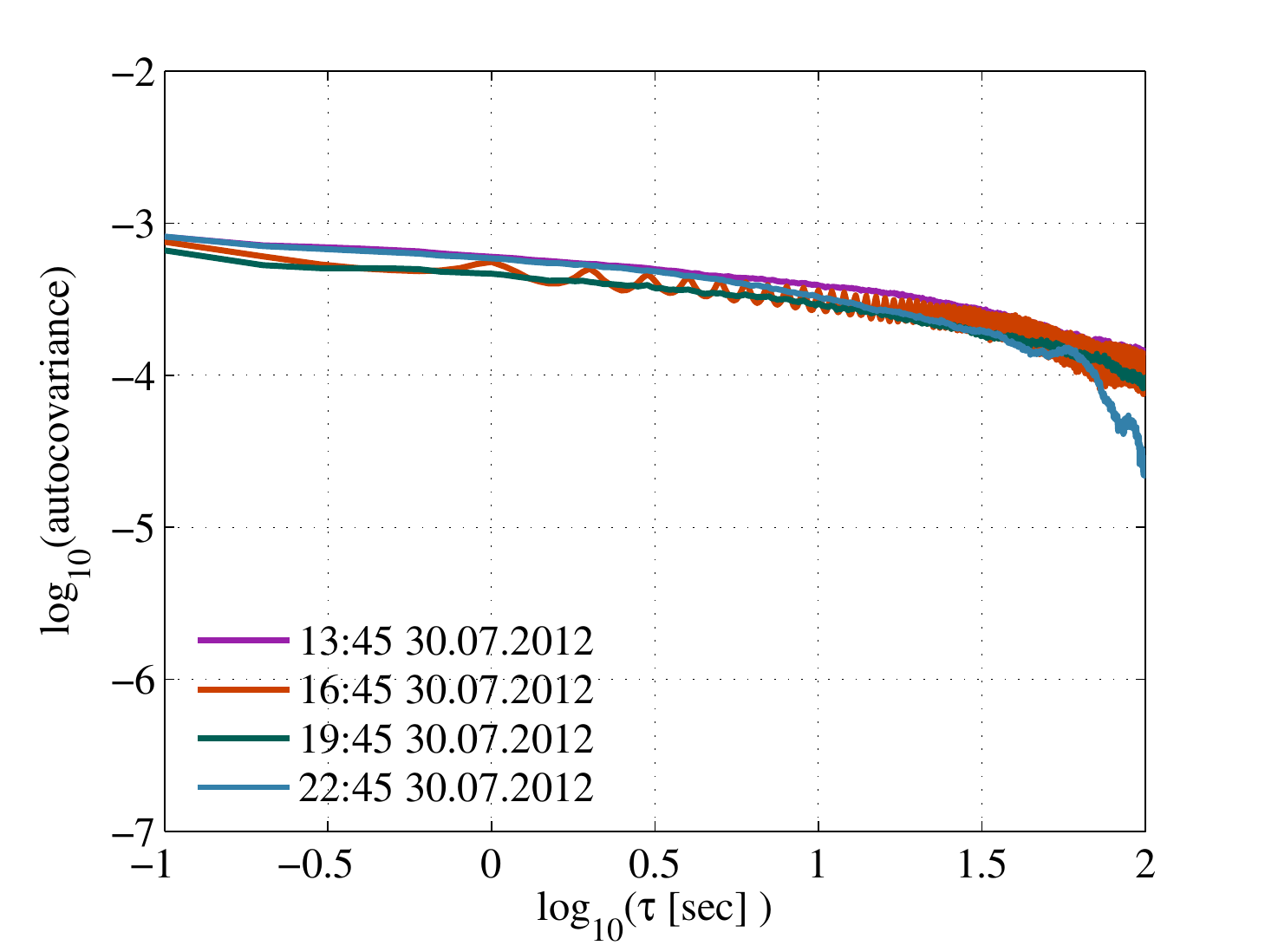}
\captionof{figure}{Covariance estimates 30.7.2012 UTC}
\end{minipage}
\begin{minipage}{\linewidth}
\includegraphics[type=pdf,ext=.pdf,read=.pdf,width=1.00\columnwidth]{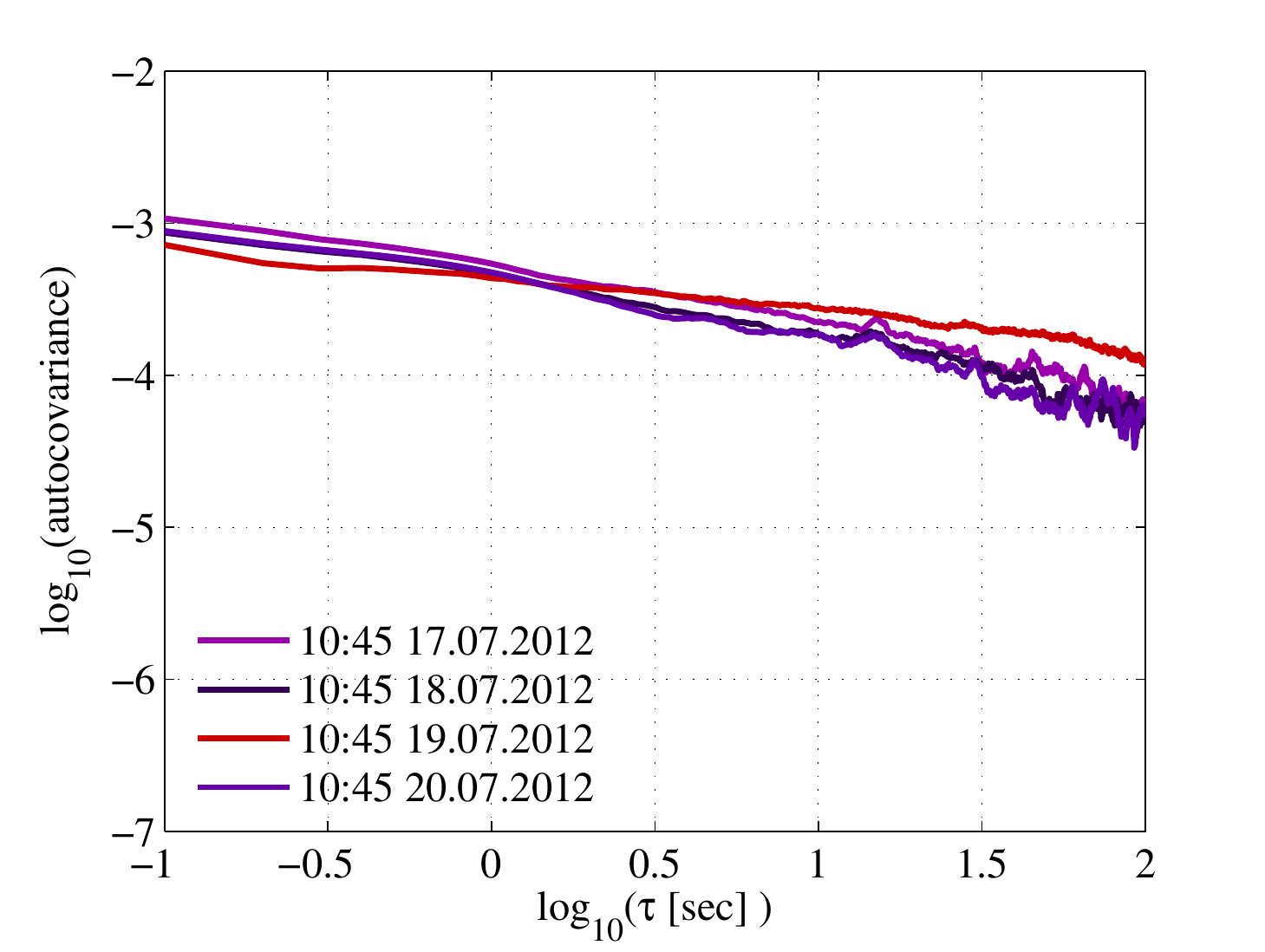}
\captionof{figure}{Covariance estimates at 10:45 UTC}
\end{minipage}
\begin{minipage}{\linewidth}
\includegraphics[type=pdf,ext=.pdf,read=.pdf,width=1.00\columnwidth]{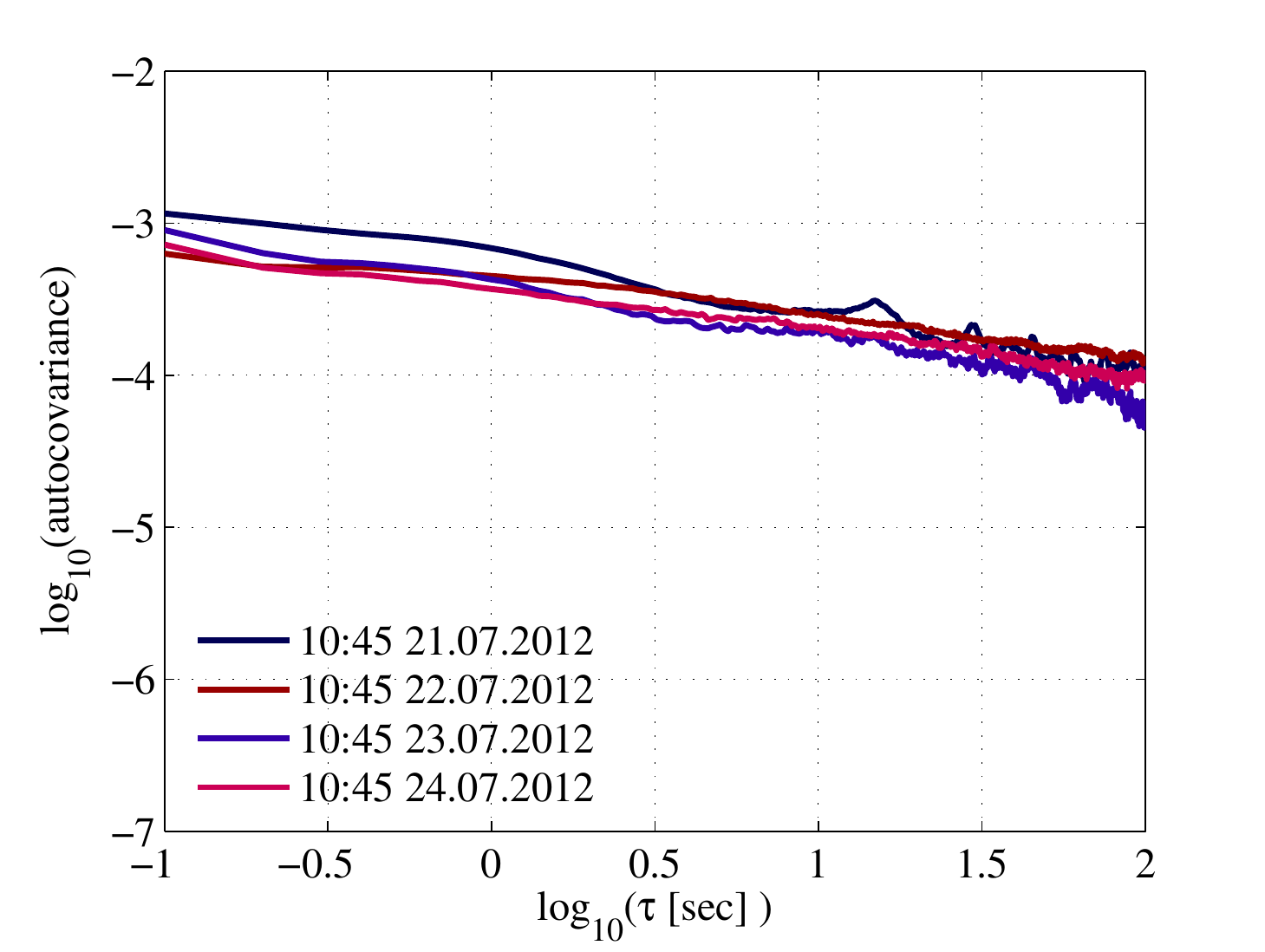}
\captionof{figure}{Covariance estimates at 10:45 UTC}
\end{minipage}
\begin{minipage}{\linewidth}
\includegraphics[type=pdf,ext=.pdf,read=.pdf,width=1.00\columnwidth]{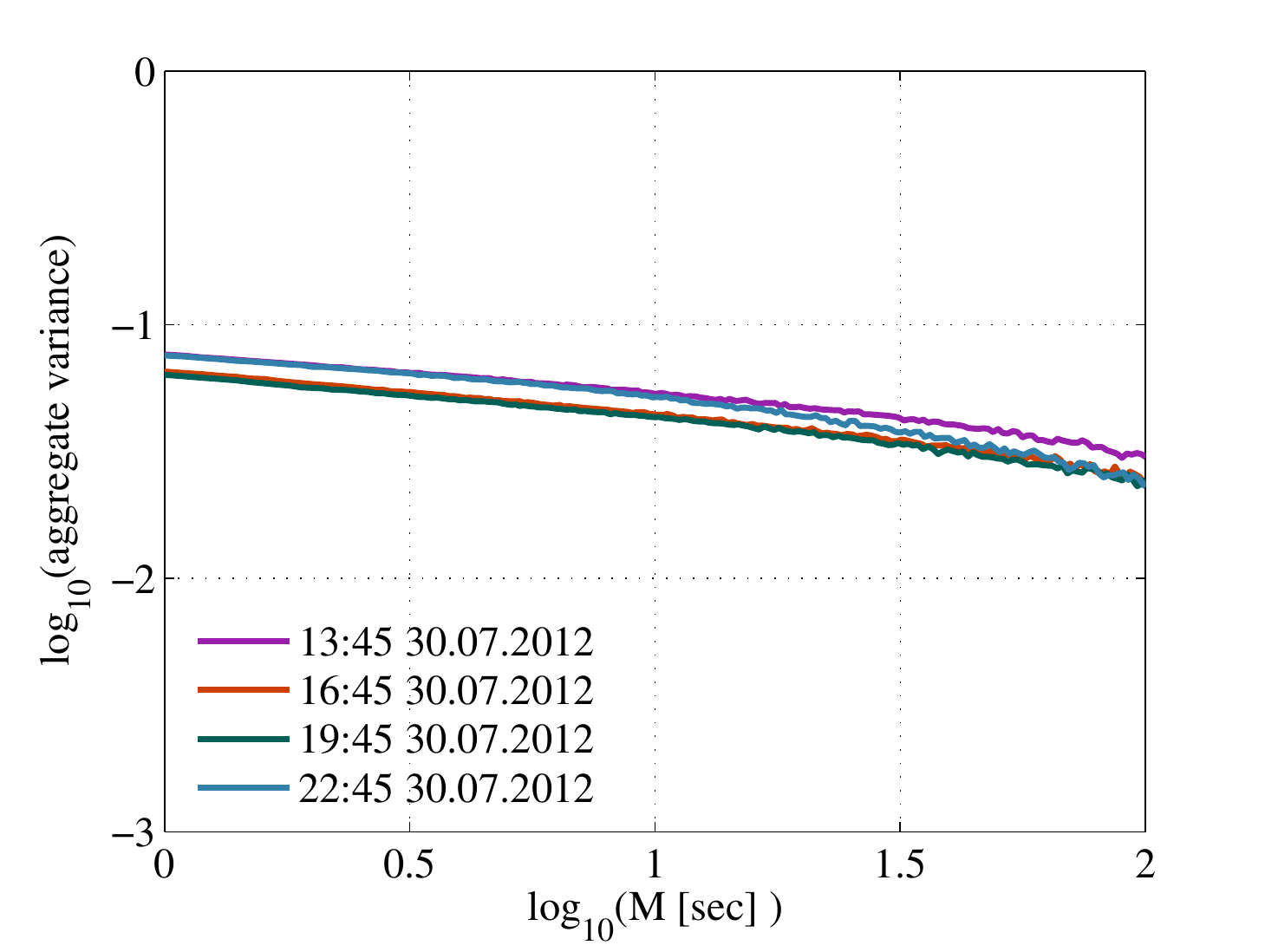}
\captionof{figure}{Aggregate variance estimates  30.7.2012 UTC}
\end{minipage}
\begin{minipage}{\linewidth}
\includegraphics[type=pdf,ext=.pdf,read=.pdf,width=1.00\columnwidth]{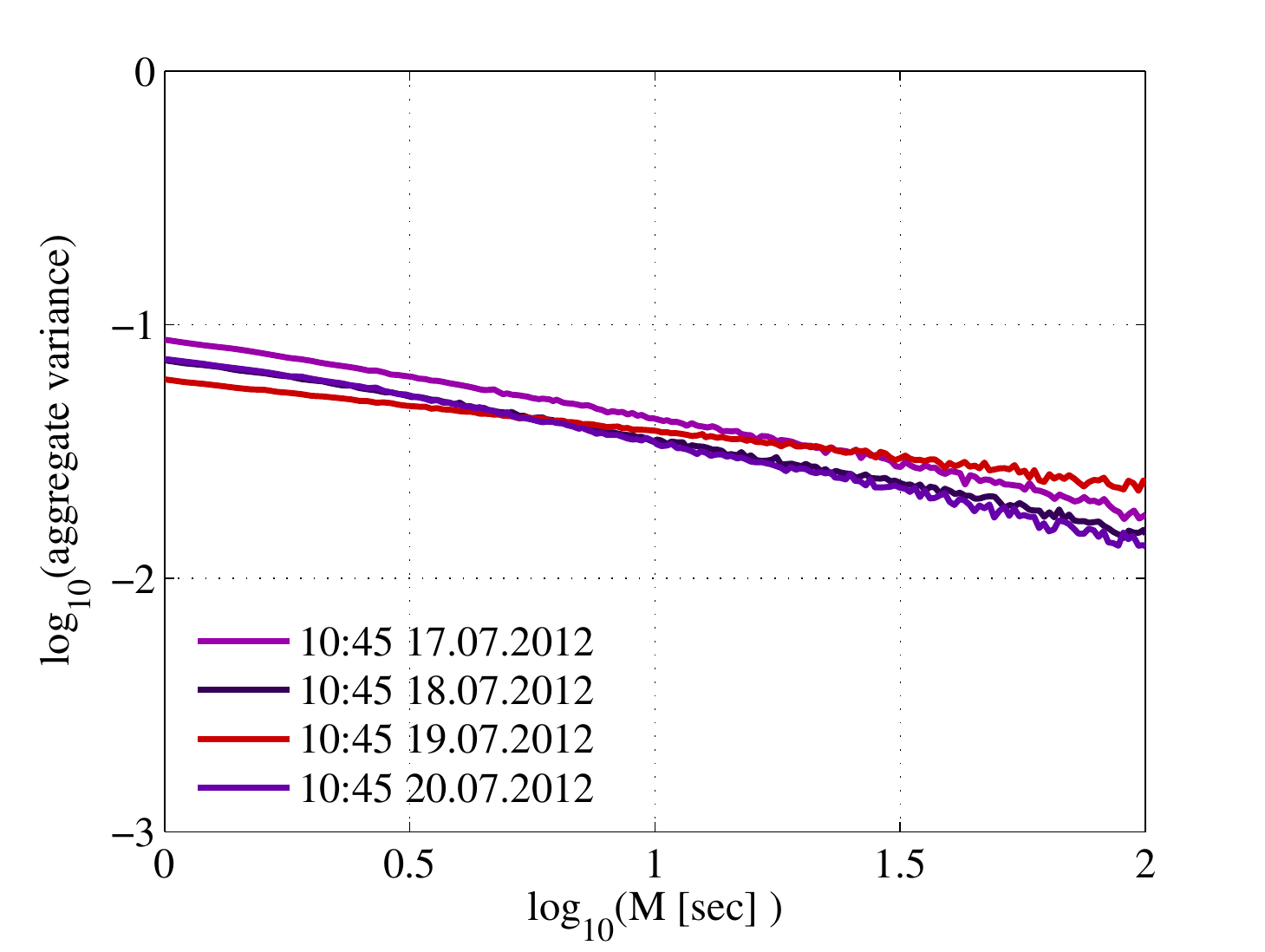}
\captionof{figure}{Aggregate variance estimates at 10:45 UTC}
\end{minipage}
\begin{minipage}{\linewidth}
\includegraphics[type=pdf,ext=.pdf,read=.pdf,width=1.00\columnwidth]{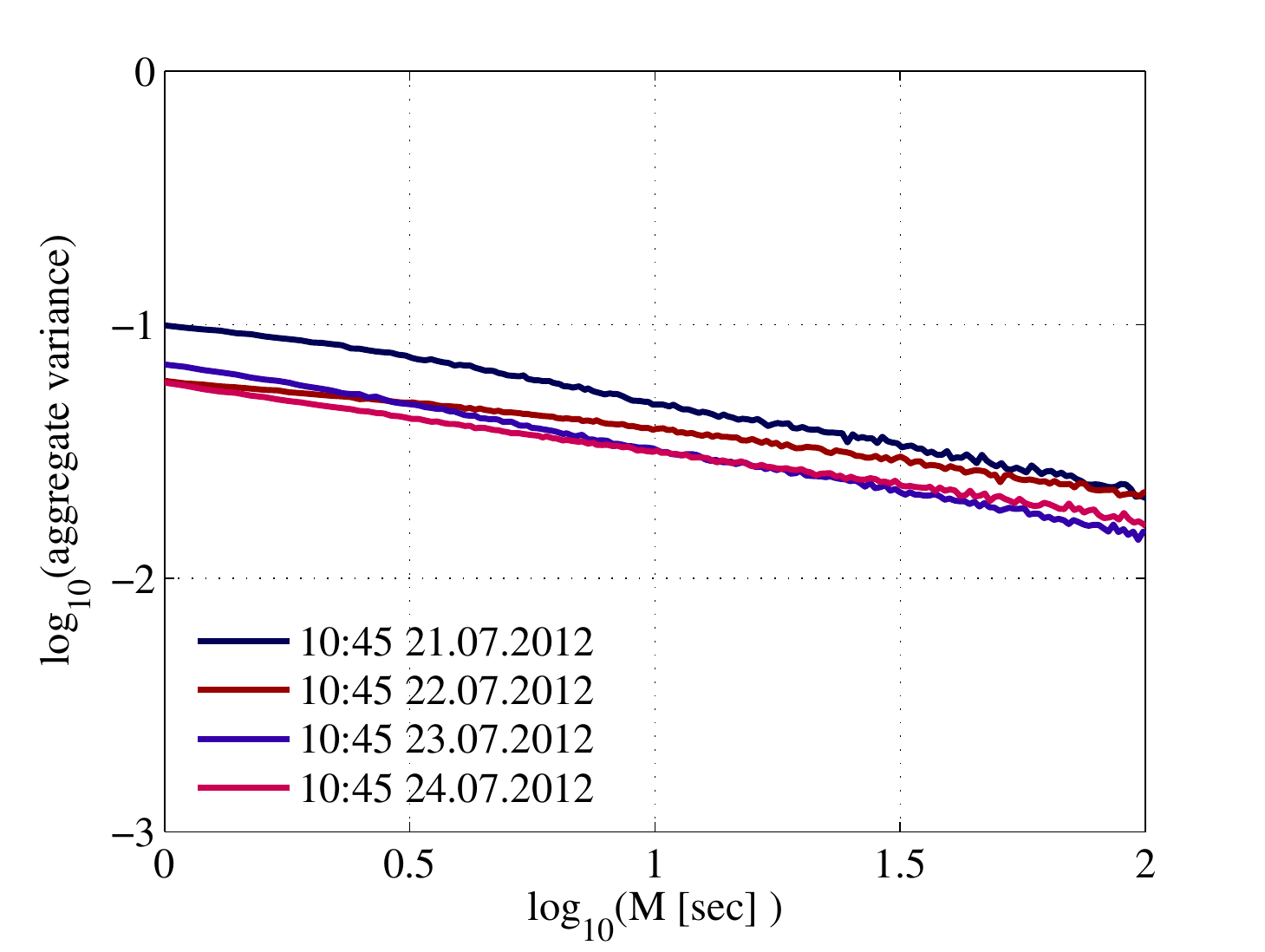}
\captionof{figure}{Aggregate variance estimates at 10:45 UTC}
\end{minipage}
%
\newpage
%

%
%
%

%
%
\end{document}